\newtheorem{example}{Example}
\newtheorem{lemma}{Lemma}
\newcommand{\df}{\stackrel{\mbox{\scriptsize def}}{=}}
\renewcommand{\Pi}{P_{\mbox{\tiny{I}}}}
\begin{document}
\title{General Linearized Polynomial Interpolation\\and Its Applications}

\author{Hongmei Xie, Zhiyuan Yan,~\IEEEmembership{Senior Member, IEEE}, and Bruce W. Suter,~\IEEEmembership{Senior Member, IEEE}
\thanks{Hongmei Xie and Zhiyuan Yan are with the Department of Electrical and Computer
Engineering, Lehigh University, Bethlehem, PA 18015, USA (E-mails:
{\tt hox209,yan@lehigh.edu}). Bruce W. Suter is with Air Force Research Laboratory, Rome, NY 13441, USA
(E-mail: {\tt bruce.suter@rl.af.mil}).}}

\maketitle

\thispagestyle{empty}
\begin{abstract}
In this paper, we first propose a general interpolation algorithm in a free module of a \textbf{linearized} polynomial ring, and then apply this algorithm to decode several important families of codes, Gabidulin codes, KK codes and MV codes. Our decoding algorithm for Gabidulin codes is different from the polynomial reconstruction algorithm by Loidreau. When applied to decode KK codes, our interpolation algorithm is equivalent to the Sudan-style list-1 decoding algorithm proposed by K\"{o}tter and Kschischang for KK codes. The general interpolation approach is also capable of solving the interpolation problem for the list decoding of MV codes proposed by Mahdavifar and Vardy, and has a lower complexity than solving linear equations.
\end{abstract}

\section{Introduction}\label{sec:intro}
Given a set of points, polynomial interpolation finds one or more polynomials that go through these points. Since error correcting codes are often defined through polynomials, polynomial interpolation is instrumental in decoding such error control codes. For instance, Reed-Solomon (RS) codes can be defined using evaluation of polynomials~\cite{RothBookRS}, and bivariate polynomial interpolation has been used in RS decoders. In particular, the K\"{o}tter interpolation~\cite{kotter_it96} implements the interpolation step of the Guruswami-Sudan algorithm~\cite{GS99} for RS codes with low complexity. Also, the Welch-Berlekamp key equation can be viewed as a rational interpolation problem, and the Welch-Berlekamp algorithm (WBA) solves this problem~\cite{WelchBerlekamp86}.

Polynomial interpolation was extended by Wang \emph{et al.}~\cite{WangMcEliece05} to a general interpolation problem in a free module that is defined over a polynomial ring over some finite field $F$ and admits an ordering. Since the free module is also a vector space over $F$, one can define linear functionals on the free module. Given any set of linear functionals, the general interpolation problem is to find a minimum element in the intersection of the kernels of the linear functionals. Wang \emph{et al.} proposed a general interpolation algorithm, and showed that the K\"{o}tter interpolation and the WBA are both special cases of this general interpolation algorithm \cite{WangMcEliece05}.

Recently, error control codes defined using evaluation of \textbf{linearized polynomials} have attracted growing attention, such as Gabidulin codes~\cite{gabidulin_pit0185} and a family of subspace codes proposed by K\"{o}tter and Kschischang \cite{kotter_it08}, referred to as KK codes. While both Gabidulin and KK codes are important to error control in random linear network coding (see, for example, \cite{kotter_it08,silva_it08,silva_it09}), Gabidulin codes are also considered for potential applications in wireless communications \cite{lusina_it03}, public-key cryptosystems \cite{gabidulin_lncs91}, and storage systems
\cite{gabidulin_pit0285, RothBookRS}. A decoding algorithm of Gabidulin codes through linearized polynomial reconstruction was proposed by Loidreau~\cite{loidreau_wcc05}, and K\"{o}tter and Kschischang proposed a Sudan-style list-1 decoding algorithm for KK codes based on bivariate linearized polynomial interpolation \cite{kotter_it08}. Following similar list decoding idea for RS codes by Guruswami and Sudan~\cite{GS99}, Mahdavifar and Vardy considered list decoding of KK codes in~\cite{mahdavifar_isit10}\cite{mahdavifar_it10}, where the construction of KK codes were modified accordingly. Codes in ~\cite{mahdavifar_isit10}\cite{mahdavifar_it10} are similar to but different from KK codes, and we call the new class of subspaces codes MV codes.

Parallel to the work of Wang \emph{et al}.~\cite{WangMcEliece05}, we investigate the general interpolation problem in a free module of a \textbf{linearized polynomial ring}. The main contributions of this paper are listed as follows.
\begin{itemize}
\item We propose a general interpolation algorithm in a free module of a linearized polynomial ring, and show that our interpolation algorithm has a polynomial time complexity.
\item We apply our interpolation algorithm to decode Gabidulin codes. The resulted decoding algorithm resembles Loidreau's decoding algorithm \makeatletter
    \renewcommand\@cite [1]{#1}
    \makeatother (cf. [\cite{loidreau_wcc05}, Table~1]), \makeatletter
    \renewcommand\@cite [1] {[#1]}
    \makeatother and both algorithms have quadratic complexity, but the two differ in several key aspects.
\item Our general interpolation approach is also used to decode KK codes. In fact, in this case, our algorithm is equivalent to the Sudan-style list-1 decoding algorithm in~\cite{kotter_it08}. That is, the Sudan-style list-1 decoding algorithm is a special case of our general interpolation algorithm, when some operations and parameters are specified.
\item Finally, we use our general interpolation algorithm to obtain the multivariate polynomial for the list decoding of MV codes in~\cite{mahdavifar_isit10}. To the best of our knowledge, there is no other efficient algorithm to accomplish the task. We also show that our algorithm has lower complexity than solving linear equations.
\end{itemize}

The rest of the paper is organized as follows. Section~\ref{sec:preliminaries} reviews the general interpolation over free modules of polynomial rings, and then introduces Gabidulin codes, KK codes and MV codes, as well as their respective decoding algorithms. In Section~\ref{sec:generalinterpo}, we propose our general interpolation algorithm over a free module of a linearized polynomial ring, and analyze its computational complexity. We apply our general interpolation algorithm to decode Gabidulin codes as well as KK codes and MV codes in Sections~\ref{sec: GeneralGb}, \ref{sec: generalKK}, and \ref{sec: generalMV}, respectively. Concluding remarks are provided in Section~\ref{sec: conclusion}.

\section{Preliminaries} \label{sec:preliminaries}
\subsection{General Polynomial Interpolation over Polynomials Ring}
Motivated by the K\"{o}tter interpolation, Wang \emph{et al}.~\cite{WangMcEliece05} consider a general interpolation problem. Let $F[x]$ be the ring of all the polynomials over some finite field $F$. A free $F[x]$-module $V$ is an $F[x]$-module with a basis. Suppose $V$ is also a vector space over $F$ with a basis $M$, then we can define a set of $C$ linear functionals $D_i$'s from $V$ to $F$, with corresponding kernels $K_i$'s, where $i = 1,2, \ldots, C$. If there is a total ordering on $M$, $V$ admits an ordering. That is, for a subset of $V$ we can find an element with the smallest order, and the element is a \emph{minimum} in this subset. The general interpolation algorithm in~\cite{WangMcEliece05} finds a minimum in $K_1\cap K_2\cap \cdots \cap K_{C}$.

\subsection{Linearized Polynomial Ring} \label{sec: L[x]}
Suppose GF$(q^m)$ is an extension field of GF$(q)$, where $q$ is a prime power and $m$ is a positive integer. A polynomial of the form
\begin{equation}
l(x) = \sum_{i = 0}^n a_ix^{q^i}
\end{equation}
with coefficients $a_i \in$ GF$(q^m)$ is called a \emph{linearized polynomial} over GF$(q^m)$. We assume $q$ is fixed, and denote $x^{q^i}$ as $x^{[i]}$ in this paper. For a linearized polynomial $l(x) = \sum_{i = 0}^n a_ix^{[i]}$ over GF$(q^m)$, its $q$-degree, denoted as deg$_q(l(x))$, is given by $\max\limits_{a_i \ne 0}\left\{i\right\}$.

Linearized polynomials are so named because for a linearized polynomial $l(x)$ over GF$(q^m)$, $\beta_1$ and $\beta_2$ in an extension field $\mathbb{K}$ of GF$(q^m)$, and $\lambda_1,\lambda_2 \in$ GF$(q)$, we have $l(\lambda_1\beta_1 + \lambda_2\beta_2)  = \lambda_1l(\beta_1) + \lambda_2 l(\beta_2)$. In other words, $l(x)$ can be treated as a linear mapping from $\beta \in \mathbb{K}$ to $l(\beta) \in \mathbb{K}$ with respect to GF$(q)$~\cite{lidl_book83}. Given two linearized polynomials $l_1(x)$ and  $l_2(x)$ over GF$(q^m)$, their GF$(q^m)$-linear combination $\alpha_1l_1(x) + \alpha_2l_2(x)$ with $\alpha_1, \alpha_2 \in$ GF$(q^m)$, is also a linearized polynomial over GF$(q^m)$. We define the multiplication between $l_1(x)$ and  $l_2(x)$ as $l_1(x) \otimes l_2(x) \df l_1(l_2(x))$, and $l(x) = l_1(x) \otimes l_2(x)$ is also a linearized polynomial over GF$(q^m)$. Note that generally $l_1(x) \otimes l_2(x)$ does not necessarily equal $ l_2(x) \otimes l_1(x)$. Thus the set of linearized polynomials over GF$(q^m)$ with polynomial addition and the multiplication $\otimes$ forms a noncommutative ring, denoted by $L[x]$. Note that there is no left or right divisor of zero in $L[x]$\cite{ore_ams33}.

\subsection{Gabidulin Codes and Loidreau's Reconstruction Algorithm}\label{sec: GB}
The rank of a vector $\mathbf{x} \in$ GF$(q^m)^n$ is the \textbf{maximal} number of coordinates that are linearly independent over GF$(q)$, denoted as $r(\mathbf{x}; q)$. The rank distance between two vectors $\mathbf{x, y} \in$ GF$(q^m)^n$ is defined to be
\begin{equation}
d_r(\mathbf{x}, \mathbf{y}) = r(\mathbf{x-y}; q).
\end{equation}
The minimum rank distance of a code $\mathcal{C}$, denoted as  $d_r(\mathcal{C})$, is simply the minimum rank distance
over all possible pairs of distinct codewords, that is, $d_r(\mathcal{C}) = \min\limits_{\mathbf{x}_i \neq \mathbf{x}_j\in \mathcal{C}} d_r(\mathbf{x}_i, \mathbf{x}_j)$.

The maximum cardinality of a rank metric code in GF$(q^m)^n$ with minimum rank distance $d$ is $\min\{q^{m(n-d+1)}, q^{n(m-d+1)}\}$ \cite{delsarte_jct78, gabidulin_pit0185, roth_it91}. We refer to codes with maximum cardinality as maximum rank distance (MRD) codes.
A family of linear MRD codes was proposed by Gabidulin \cite{gabidulin_pit0185}, and is often referred to as Gabidulin codes.
An $(n, k)$ Gabidulin code $\mathcal{C_R}$ over GF$(q^m)$ ($n \le m$) is defined by a generator matrix $\mathbf{G}$ of the form
\begin{equation} \label{equ: Ggenematrix}
\mathbf{G} = \left( \begin{array}{cccc}
g_0   &   g_1   &  \cdots  & g_{n-1} \\
g_0^{[1]}   & g_1^{[1]}   &   \cdots   &  g_{n-1}^{[1]} \\
\vdots   &   \vdots   & \ddots  &   \vdots \\
g_0^{[k-1]}  &  g_1^{[k-1]}  &  \cdots  & g_{n-1}^{[k-1]}
\end{array} \right),
\end{equation}
where $g_0, g_1, \ldots, g_{n-1}$ are linearly independent over GF$(q)$. We introduce the vector $\mathbf{g} = (g_0, g_1, \ldots, g_{n-1})$ for future reference. For a message vector $\mathbf{u} = (u_0, u_1, \ldots, u_{k-1})$ and its corresponding message polynomial $f(x) = \sum_{i = 0}^{k-1} u_i x^{[i]}$, the codeword to be transmitted is $\mathbf{x} = (f(g_0), f(g_1), \ldots, f(g_{n-1}))$. Suppose an additive error $\mathbf{e} = (e_0, e_1, \ldots, e_{n-1})$ occurs, and the received vector is $\mathbf{y} = \mathbf{x} + \mathbf{e}= (y_0, y_1, \ldots, y_{n-1})$, where $y_i = x_i + e_i$ for $i = 0, 1, \ldots, n-1$. Given $\mathbf{y}$, a bounded distance decoder with decoding radius $t \le (n-k)/2$ tries to find $\mathbf{x'} \in \mathcal{C_R} \textrm{ and } \mathbf{e'} \in$ GF$(q^m)^n$ such that $\mathbf{y} = \mathbf{x'} + \mathbf{e'}$ with $d_r(\mathbf{y}, \mathbf{x'}) \le t$. If such $\mathbf{x'}$ and $\mathbf{e'}$ exist, the received vector $\mathbf{y}$ is said to be decodable~\cite{gabidulin_pit0185}.

Gabidulin codes can be defined using evaluation of linearized polynomials, analogous to RS codes, which are defined using evaluation of polynomials. Hence Loidreau devised a method to decode Gabidulin codes through \emph{reconstruction of linearized polynomials}\makeatletter
    \renewcommand\@cite [1]{#1}
    \makeatother (cf. [\cite{loidreau_wcc05}, Table~1]),\makeatletter
    \renewcommand\@cite [1] {[#1]}
    \makeatother
where a pair of linearized polynomials, $V(y)$ and $N(x)$ are constructed such that $V(y_i) = N(g_i)$ for $i = 0, 1, \ldots, n-1$, with deg$_q(V(y)) \le t$ and deg$_q(N(x)) \le k+t-1$. It is shown~\cite{loidreau_wcc05} that if $t \le (n - k)/2$, one gets a solution of decoding Gabidulin codes from any solution of the reconstruction problem. Loidreau's algorithm~\cite{loidreau_wcc05} constructs two sequences of polynomials $(V_0(y), N_0(x))$ and $(V_1(y), N_1(x))$, and updates them iteratively by discrepancy-based update rules, so that each sequence satisfies the objective equation for the first $i$ points after the $i$th iteration. To implement the degree constraints on the linearized polynomials, Loidreau's algorithm starts with initial polynomials of designated $q$-degrees, and then aims to increase the $q$-degrees of each sequence of polynomials strictly once every two iterations. The algorithm outputs $N_1(x)$ with $q$-degree no more than $k+\lfloor (n - k)/2 \rfloor -1$ and $V_1(y)$ of $q$-degree no more than $\lfloor (n - k)/2 \rfloor$.

\subsection{KK Codes and Their Decoding Algorithm}
\label{sec: introKK}
KK codes~\cite{kotter_it08} are a type of subspace codes for random linear network coding, where subspaces are transmitted and received at both ends. Suppose $W$ is a vector space over GF$(q)$, and $\mathcal{P}(W)$ is the set of all subspaces of $W$. For $U, V \in \mathcal{P}(W)$, the subspace distance $d_s$~\cite{kotter_it08} between $V$ and $U$ is defined as
\begin{equation} \label{equ: ds}
d_s(V, U) \df \textrm{dim}(V + U) - \textrm{dim}(V \cap U),
\end{equation}
where dim$(A)$ denotes the dimension of a subspace $A\in \mathcal{P}(W)$, $V \cap U$ is the intersection space of $V$ and $U$, and $V+U$ is the smallest subspace that contains both $V$ and $U$.

Suppose an $l$-dimensional subspace $V\in \mathcal{P}(W)$ is a codeword of a KK code. The basis of $V$ is obtained via evaluation of linearized polynomials. First we select $l$ ($l \le m$) elements $\alpha_0, \alpha_1, \ldots, \alpha_{l - 1} \in$ GF$(q^m)$ that are linearly independent over GF$(q)$. Theses $l$ elements span an $l$-dimensional vector space $\langle A \rangle \subseteq$ GF$(q^m)$, where $A = \{\alpha_i: i = 0, 1, \ldots, l -1\}$. We then construct $W$ by $W = \langle A \rangle \oplus$ GF$(q^m) = \{(\alpha, \beta): \alpha \in \langle A \rangle, \beta \in$ GF$(q^m)\}$. Given a message vector $\mathbf{u} = (u_0, u_1, \ldots, u_{k-1})$ over GF$(q^m)$, the message polynomial is defined to be $u(x) = \sum_{i = 0}^{k-1} u_i x^{[i]}$. Finally, the subspace spanned by $\{(\alpha_i, \beta_i): \beta_i = u(\alpha_i), i = 0, 1, \ldots, l-1\}$ is an $l$-dimensional subspace of $W$, as all the pairs $(\alpha_i, \beta_i)$ are linearly independent~\cite{kotter_it08}.

Suppose $V$ is transmitted over the operator channel~\cite{kotter_it08}, and an $(l-\rho + t)$-dimensional subspace $U$ of $W$ is received, with dim$(U \cap V) = l - \rho$ and $d_s(U, V) = \rho + t$. It is proved that the error is decodable by the list-1 decoding algorithm~\cite{kotter_it08} if $\rho + t < l - k + 1$.
Let $l - \rho + t = r$, and $\{(x_0, y_0), (x_1, y_1), \ldots, (x_{r-1}, y_{r-1})\}$ be a basis for $U$. The decoding algorithm in~\cite{kotter_it08} consists of an interpolation step and a factorization step. First the interpolation procedure finds a nonzero bivariate polynomial $Q(x, y)= Q_x(x) + Q_y(y)$ such that
\begin{equation} \label{equ: KKinterpolation}
Q(x_i, y_i) = 0 \textrm{ for } i = 0, 1, \ldots, r-1,
\end{equation}
where $Q_x(x)$ and $Q_y(y)$ are linearized polynomials of $q$-degrees at most $\tau - 1$ and $\tau - k$ respectively. Then a message polynomial $\hat{u}(x)$ is obtained in the factorization step by right division~\cite{kotter_it08} if $Q(x, \hat{u}(x)) \equiv 0$. Decodability is guaranteed if we select $\tau = \lceil (r + k) /2 \rceil$~\cite{kotter_it08}.

The interpolation procedure of the decoding algorithm in~\cite{kotter_it08}, called  a Sudan-style list-1 decoding algorithm,  adopts some discrepancy based update rules. During the $i$-th iteration, the algorithm generates an $x$-minimal bivariate polynomial and a $y$-minimal bivariate polynomial, $f_0^{(i)}(x,y)$ and $f_1^{(i)}(x,y)$, that interpolate through the first $i$ points for $i=1,2,\ldots, r$, where $r$ is the total number of points to be interpolated. Finally, the minimum one between $f_0^{(r)}(x,y)$ and $f_1^{(r)}(x,y)$, defined under an order of $\prec$~\cite{kotter_it08}, is the decoding output.

%

\subsection{MV Codes and Their List Decoding Algorithm}
\label{sec: introMV}
MV codes are similar to but different from KK codes~\cite{kotter_it08}. To enable list decoding, different code constructions are proposed for different code dimensions in~\cite{mahdavifar_isit10}~\cite{mahdavifar_it10}.

To construct an $l$-dimensional MV code over GF$(q^{ml})$, $l$ has to be a positive integer that divides $q-1$. Then the equation $x^l - 1 = 0$ has $l$ distinct roots $e_1  = 1, e_2, \ldots, e_l$ over GF$(q)$. Choose a primitive element $\gamma$ over GF$(q^{ml})$ with $\gamma, \gamma^{[1]},\ldots,\gamma^{[ml-1]}$ being a normal basis for GF$(q^{ml})$. Then construct elements $\alpha_i$ over GF$(q^{ml})$ by $\alpha_i = \gamma + e_i \gamma^{[m]} + e_i^2 \gamma^{[2m]} + \cdots + e_i^{l-1}\gamma^{[m(l-1)]}$ for $i = 1,2, \ldots, l$. It is proved~\cite{mahdavifar_it10} that the set $\{ \alpha_i^{[j]} : i = 1,2,\ldots, l, j = 0,1, \ldots, m-1\}$ is a basis of GF$(q^{ml})$ over GF$(q)$.

For a message vector $\mathbf{u} = (u_0, u_1, \ldots, u_{k-1})$ over GF$(q)$, the message polynomial is $u(x) = \sum_{i=0}^{k-1} u_i x^{[i]}$. Let $u^{\otimes i}(x)$ denote the composition of $u(x)$ with itself by $i$ times for any nonnegative integer $i$, while $u^{\otimes 0 }(x) =x$. Then the codeword $V$ corresponding to the message $\mathbf{u}$ is spanned by a set of vectors $v_i$ for $i = 1,2,\ldots, l$, where $v_1 = (\alpha_1, u(\alpha_1), u^{\otimes 2}(\alpha_1), \ldots, u^{\otimes L}(\alpha_1))$, $v_i = (\alpha_i, \frac{u(\alpha_i)}{\alpha_i},\ldots, \frac{u^{\otimes L}(\alpha_i)}{\alpha_i})$, and $L$ is the desired list size. Note that $\frac{u^{\otimes j}(\alpha_i)}{\alpha_i} \in$ GF$(q^m)$ for any $j \ge 0$ and $i=2,3,\ldots, l$~\cite{mahdavifar_it10}. Then $V$ is an $l$-dimensional subspace of the $(Lm+l)$-dimensional ambient space $W = \langle \alpha_1, \alpha_2,\ldots, \alpha_l \rangle \oplus \underbrace{ \textrm{GF}(q^m) \oplus \cdots \oplus \textrm{GF}(q^m)}_{L\textrm{ times}}$. Suppose an error of dimension $t$ occurs, and an $(l+t)$-dimensional subspace $U$ of $W$ is received. The decoder first finds subspaces $U_i$ such that $U_i = \{ (x, y_1, y_2, \ldots, y_L) : x \in \langle \alpha_i \rangle\}$ for $i=1,2,\ldots,l$. Then, a basis $\{(x_{1,j}, y_{1,1,j}, y_{1,2,j},\ldots, y_{1, L, j}): j = 1, 2, \ldots, r_1\}$ of $U_1$ is found, where $r_1$ is the dimension of $U_1$. If $l=1$, we ignore the first step and simply find a basis for the $(t+1)$-dimensional received subspace $U_1$. For $i = 2, 3, \ldots, l$, the decoder obtains $U'_i = \{ (x, \alpha_i y_1, \alpha_i y_2, \ldots, \alpha_i y_L) : (x, y_1, y_2, \ldots, y_L) \in U_i\}$, and finds a basis $\{(x_{i,j}, y_{i,1,j}, y_{i,2,j},\ldots, y_{i, L, j}): j = 1, 2, \ldots, r_i\}$ of $U'_i$, where $r_i$ is the dimension of $U_i$. Finally, the decoder constructs a nonzero multivariate polynomial $Q(x, y_1, y_2, \ldots, y_L)= Q_0(x) + Q_1(y_1) + Q_2(y_2) + \cdots + Q_L(y_L)$, where $Q_s$ is a linearized polynomials over GF$(q^{ml})$ of $q$-degree at most $ml -s(k-1) - 1$ for $s = 0, 1, \ldots, L $, such that for $i= 1,2,\ldots, l$, $j = 1,2, \ldots, r_i$, and $h = 0,1,\ldots, m-1$,
\begin{equation} \label{equ: lLinter}
Q(x_{i,j}^{[h]}, y_{i,1,j}^{[h]}, \ldots, y_{i, L, j}^{[h]}) = 0.
\end{equation}
Using the LRR algorithm in~\cite{mahdavifar_isit10}, the decoder finds all possible polynomials $\hat{u}(x)$'s such that
\begin{equation} \label{equ: lLfactor}
Q(x, \hat{u}(x), \hat{u}^{\otimes 2}(x),\ldots,\hat{u}^{\otimes L}(x)) \equiv 0.
\end{equation}
It is proved~\cite{mahdavifar_it10} that~(\ref{equ: lLinter}) has a nonzero solution if
\begin{equation} \label{equ: lLcondition}
t < lL -L(L+1)\frac{k-1}{2m},
\end{equation}
and there are at most $L$ solutions to (\ref{equ: lLfactor}), among which the transmitted message polynomial $u(x)$ is guaranteed to be included.

\section{General Interpolation by Linearized Polynomials}
\label{sec:generalinterpo}

In this section, we investigate the general interpolation problem by linearized polynomials. We first present the general interpolation problem, then propose our general interpolation algorithm, which follows a strategy similar to that in~\cite{WangMcEliece05}.

\subsection{General Interpolation over Free $L[x]$-Modules} \label{section: GeneralInterpolation}

Suppose $L[x]$ is the ring of linearized polynomials over GF$(q^m)$, and $V$ is a free $L[x]$-module with a basis $B=\{b_{0}, b_{1}, \ldots, b_{L} \}$.  We denote the multiplication between an element in $L[x]$ and an element in the module by $\circ$, and any element $Q\in V$ can be represented by
\begin{equation} \label{equ: GIexpand}
Q = \sum_{j=0}^L l_{j}(x) \circ b_{j} = \sum_{j=0}^{L} \sum_{i \geq 0} a_{i,j}x^{[i]} \circ b_{j},
\end{equation}
where $l_{j}(x) \in L[x]$ and $a_{i,j} \in \textrm{GF}(q^m)$. Thus $V$ is also a vector space over GF$(q^m)$ with a basis
\begin{equation} \label{equ: GIM}
M = \{ x^{[i]} \circ b_{j}, i\ge 0, j = 0, 1, \ldots, L\}.
\end{equation}
Suppose there exists a total ordering $<$ on $M$, and we can write $M = \{\phi_j\}_{j \ge 0}$ such that $\phi_i < \phi_j$ when $i < j$. Then $Q \in V$ can be represented by
\begin{equation} \label{equ: orderM}
Q = \sum_{j = 0}^{J} a_{j} \phi_{j},
\end{equation}
where $\phi_{j} \in M$ and $a_{J} \ne 0$. $J$ is called the \emph{order} of $Q$, denoted as order$(Q)$, and $\phi_J$ is the \emph{leading monomial} of $Q$, denoted as LM$(Q)$. We write $Q <_o Q'$ if order$(Q) <$ order$(Q')$, and $Q =_o Q'$ if order$(Q) =$ order$(Q')$. An element $Q$ is a minimum in a subset of $V$ if its order is the lowest among all the elements in the subset. Further, we define Ind$_y(l(x)\circ b_j) = j$, and Ind$_y(Q) = $ Ind$_y($LM$(Q))$. Then we introduce a partition of $V$ as $V = \bigcup_jS_j$, where $S_j = \{Q \in V: \textrm{Ind}_y(Q) = j \}$.

For the vector space $V$ over GF$(q^m)$, we consider a set of $C$ linear functionals $D_{i}$ from $V$ to GF$(q^m)$: $D_1, D_2, \ldots, D_{C}$. Suppose $K_i$ is the kernel of $D_i$ and $\overline{K}_{i} = K_{1} \cap K_{2} \cap \cdots \cap K_{i}$ is an $L[x]$-submodule, then the general interpolation problem is to find a minimum $Q^* \in \overline{K}_{C}$, that is, to find an element $Q^* \in V$ such that it lies in the kernels of all the given linear functionals. Furthermore, we can show the uniqueness of $Q^*$ as in~\cite{WangMcEliece05}.
\begin{lemma} \label{lemma: uniqueness}
The minimum in $\overline{K}_C$ is unique up to a scalar.
\end{lemma}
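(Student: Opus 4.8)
The plan is to mimic the uniqueness argument of Wang \emph{et al.}~\cite{WangMcEliece05}, adapted to the noncommutative ring $L[x]$. Suppose, for the sake of contradiction, that $Q_1^*$ and $Q_2^*$ are two minima in $\overline{K}_C$, so that $\mathrm{order}(Q_1^*) = \mathrm{order}(Q_2^*) = J$ for some $J$, and neither is a scalar multiple of the other. Write each in the ordered-basis form of~(\ref{equ: orderM}), say $Q_1^* = \sum_{j=0}^{J} a_j \phi_j$ with $a_J \neq 0$ and $Q_2^* = \sum_{j=0}^{J} b_j \phi_j$ with $b_J \neq 0$. Since $a_J, b_J \in \mathrm{GF}(q^m)^{\times}$, form the combination $Q = b_J Q_1^* - a_J Q_2^*$. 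The key observation is that $\overline{K}_C$ is an $L[x]$-submodule, hence in particular closed under $\mathrm{GF}(q^m)$-scalar multiplication and addition (scalars sit inside $L[x]$ as the degree-$0$ linearized monomials $a x = a x^{[0]}$), so $Q \in \overline{K}_C$.

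Next I would show $Q$ is either zero or has strictly smaller order than $J$. By construction the coefficient of $\phi_J$ in $Q$ is $b_J a_J - a_J b_J = 0$, and all the basis elements $\phi_{J+1}, \phi_{J+2}, \ldots$ (those of higher order) already have coefficient $0$ in both $Q_1^*$ and $Q_2^*$; therefore $Q$ is supported on $\{\phi_0, \ldots, \phi_{J-1}\}$, which forces $\mathrm{order}(Q) < J$ whenever $Q \neq 0$. But $Q_1^*$ was assumed to be a minimum of $\overline{K}_C$, i.e. $J$ is the smallest order attained on $\overline{K}_C$, so no nonzero element of $\overline{K}_C$ can have order below $J$. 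Hence $Q = 0$, i.e. $b_J Q_1^* = a_J Q_2^*$, which means $Q_2^* = (b_J / a_J) Q_1^*$ — a scalar multiple, contradicting our assumption and proving uniqueness up to a scalar.

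The step I expect to require the most care is the assertion that $\overline{K}_C$ is closed under the relevant operations and that scalar multiplication behaves as expected inside the module $V$ over the noncommutative ring $L[x]$. Concretely, one must check that $b_J \cdot Q_1^* := (b_J x) \circ Q_1^*$ and the ordinary $\mathrm{GF}(q^m)$-vector-space scalar multiplication on $V$ coincide (they do, since $(b_J x)\circ\big(\sum_i a_i x^{[i]} \circ b_j\big) = \sum_i (b_J a_i) x^{[i]} \circ b_j$ because composing with the linear map $x \mapsto b_J x$ just multiplies every coefficient on the left by $b_J$), and that this operation does not increase order — indeed it preserves the leading monomial and only rescales its coefficient, so order is unchanged and $\mathrm{Ind}_y$ is unchanged. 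Everything else is the routine linear-algebra argument above; the only genuinely ``linearized-polynomial'' subtlety is making sure the left/right multiplication conventions in the definition of $\circ$ do not spoil the coincidence of the two notions of scalar multiplication, which is why I would state that verification explicitly rather than leave it implicit.
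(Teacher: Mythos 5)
Your proposal is correct and follows essentially the same route as the paper's proof: form the $\mathrm{GF}(q^m)$-linear combination that cancels the leading coefficients of the two minima, observe it stays in $\overline{K}_C$, and conclude it must vanish (else it would contradict minimality), so the two minima agree up to a scalar. Your explicit check that scalar multiplication coincides with $\circ$-multiplication by a degree-$0$ linearized polynomial and preserves order is a careful addition the paper leaves implicit, but it does not change the argument.
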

\begin{proof}
Suppose both $Q^*$ and $Q'$ have the minimum order in $\overline{K}_C$, then there exists a nontrivial linear combination $\alpha Q^* + \beta Q'$, where $\alpha, \beta \in $ GF$(q^m)$, such that $\alpha Q^* + \beta Q' <_o Q^* =_o Q'$. Since $\alpha Q^* + \beta Q' \in \overline{K}_C$, this contradicts the minimality of $Q^*$ and $Q'$.
\end{proof}

\begin{algorithm}
\caption{General Interpolation by Linearized Polynomials}
\label{alg: GILinearized}
\begin{algorithmic}
\FOR{$j = 0$ to $L$}
\STATE {$g_{0,j} \leftarrow  b_j$}
\ENDFOR
\FOR{$i = 0$ to $C-1$}
\FOR{$j = 0$ to $L$}
\STATE{$g_{i+1, j} \leftarrow g_{i,j}$}
\STATE{$\Delta_{i+1,j}  \leftarrow D_{i+1}(g_{i,j})$}
\ENDFOR
\STATE{$J \leftarrow \{ j: \Delta_{i+1,j} \ne 0\}$}
\IF{$J \ne \emptyset$}
\STATE{$ j^* \leftarrow\underset{j \in J}{\mathrm{argmin}}\{g_{i,j} \}$}
\FOR{ $j \in J$ }
\IF{$j \ne j^*$}
\STATE{$g_{i+1,j} \leftarrow \Delta_{i+1,j^*} g_{i,j} - \Delta_{i+1,j} g_{i,j^*}$}
\ELSIF{$j = j^*$}
\STATE{$g_{i+1,j} \leftarrow \Delta_{i+1,j} (x^{[1]} \circ g_{i,j}) - D_{i+1}(x^{[1]} \circ g_{i,j})g_{i,j}$}
\ENDIF
\ENDFOR
\ENDIF
\ENDFOR
\STATE{$Q^* \leftarrow \min\limits_{j} ~g_{C,j}$}
\end{algorithmic}
\end{algorithm}

Define $T_{i,j} = \overline{K}_i \cap S_j$, and $g_{i,j} = \min \limits_{g \in T_{i,j}} ~g$, then the general interpolation problem is equivalent to finding $g_{i,j}$ for $i = 1, \ldots, C$. The key idea is to iteratively construct $g_{i+1,j}$ from $g_{i,j}$ by a discrepancy based update, starting from some initial values. We propose an algorithm to solve this general interpolation algorithm, given in  Algorithm~\ref{alg: GILinearized}. In Algorithm~\ref{alg: GILinearized}, there are three cases, and in each case a different update is used to obtain $g_{i+1, j}$ based on $g_{i, j}$.
\begin{enumerate}
\item If $g_{i,j} \in K_{i+1}$, then $g_{i+1,j} = g_{i,j}$.
\item For $g_{i,j}$'s not in $K_{i+1}$, we find one of them with the lowest order, denoted as $g_{i,j^*}$, and check whether $g_{i,j} =_o g_{i,j^*}$. If $g_{i, j} \ne_o g_{i, j^*}$, then $g_{i+1,j} = D_{i+1}(g_{i,j^*}) g_{i,j} - D_{i+1}(g_{i,j}) g_{i,j^*}$. We call this type of update a cross-term update. Note in this case, the order of $g_{i,j}$ is preserved, that is, $g_{i+1, j} =_o g_{i,j}$.
\item For $g_{i+1, j^*}$, we construct $g_{i+1,j^*}$ by $g_{i+1,j^*} = D_{i+1}(g_{i,j^*}) (x^{[1]} \circ g_{i,j^*}) - D_{i+1}(x^{[1]} \circ g_{i,j^*})g_{i,j^*}$. We call this type of update an order-increase update. In this case, $g_{i+1, j^*}$ takes a higher order than $g_{i,j^*}$, that is, $g_{i, j^*} <_o g_{i+1,j^*}$.
\end{enumerate}

\begin{lemma} \label{lemma: GIminimum}
In each of the three cases, $g_{i+1, j}$ is a minimum in $T_{i+1, j}$.
\end{lemma}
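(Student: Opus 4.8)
The plan is to prove Lemma~\ref{lemma: GIminimum} as the inductive step underlying Algorithm~\ref{alg: GILinearized}: under the induction hypothesis that $g_{i,j}=\min_{g\in T_{i,j}}g$ for every $j$, together with the standing assumption that each $\overline{K}_i$ is an $L[x]$-submodule and the (implicit) compatibility of the order on $M$ with multiplication by $x^{[1]}$ — namely that the latter is strictly order-increasing, preserves $\textrm{Ind}_y$, and sends $x^{[a]}\circ b_j$ to $x^{[a+1]}\circ b_j$ — I would show each updated $g_{i+1,j}$ is again a minimum of $T_{i+1,j}$. I would split this into a membership claim, $g_{i+1,j}\in T_{i+1,j}$, and a minimality claim, handled case by case.

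For membership: since $D_{i+1}$ is $\textrm{GF}(q^m)$-linear, a one-line computation gives $D_{i+1}(g_{i+1,j})=0$ in all three update formulas, so $g_{i+1,j}\in K_{i+1}$; and $g_{i+1,j}\in\overline{K}_i$ because $\overline{K}_i$ is closed under $\textrm{GF}(q^m)$-combinations and under multiplication by $x^{[1]}$, whence $g_{i+1,j}\in\overline{K}_i\cap K_{i+1}=\overline{K}_{i+1}$. To see $\textrm{Ind}_y(g_{i+1,j})=j$ I track leading monomials: Case~1 changes nothing; in Case~2, $j\ne j^*$ (distinct $\textrm{Ind}_y$, hence distinct orders) forces $g_{i,j^*}<_o g_{i,j}$, so the term $\Delta_{i+1,j^*}g_{i,j}$, with $\Delta_{i+1,j^*}\ne 0$, dominates and $\textrm{LM}(g_{i+1,j})=\textrm{LM}(g_{i,j})$; in Case~3, $x^{[1]}\circ g_{i,j^*}>_o g_{i,j^*}$ dominates the subtracted term, so $\textrm{LM}(g_{i+1,j^*})=\textrm{LM}(x^{[1]}\circ g_{i,j^*})=x^{[a+1]}\circ b_{j^*}$ where $\textrm{LM}(g_{i,j^*})=x^{[a]}\circ b_{j^*}$, so $g_{i,j^*}<_o g_{i+1,j^*}$. (The Frobenius twist carried by left multiplication by $x^{[1]}$ only raises coefficients to the $q$-th power, so it never changes which monomial is leading.) For minimality, $\overline{K}_{i+1}\subseteq\overline{K}_i$ gives $T_{i+1,j}\subseteq T_{i,j}$, so every $h\in T_{i+1,j}$ has order at least that of $g_{i,j}$; in Cases~1 and~2 we have $g_{i+1,j}=_o g_{i,j}$, so there is nothing more to prove.

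Case~3 is the substantive one. Suppose some $h\in T_{i+1,j^*}$ has $h<_o g_{i+1,j^*}$. From $h\in T_{i,j^*}$ and $\textrm{Ind}_y(h)=j^*$, the order of $\textrm{LM}(h)=x^{[a']}\circ b_{j^*}$ lies between that of $x^{[a]}\circ b_{j^*}$ and that of $x^{[a+1]}\circ b_{j^*}$, so column-monotonicity forces $a'=a$ and $\textrm{LM}(h)=\textrm{LM}(g_{i,j^*})$. Then $w:=h-\lambda g_{i,j^*}$, for the appropriate scalar $\lambda\ne 0$, lies in $\overline{K}_i$ with order strictly below $g_{i,j^*}$ (or is $0$), yet $D_{i+1}(w)=-\lambda\Delta_{i+1,j^*}\ne 0$ since $j^*\in J$, so $w\ne 0$. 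To reach a contradiction I would use the Gröbner-basis consequence of the induction hypothesis: since each $g_{i,j}=\min T_{i,j}$ and $\overline{K}_i$ is a submodule, a standard leading-term reduction shows $\{x^{[s]}\circ g_{i,j}:s\ge 0,\ 0\le j\le L\}$ is a $\textrm{GF}(q^m)$-basis of $\overline{K}_i$ with pairwise distinct leading monomials. Expanding $w$ in this basis, every $g_{i,j}$ that occurs satisfies $g_{i,j}\le_o w<_o g_{i,j^*}$, so the rule $j^*=\mathrm{argmin}_{j\in J}g_{i,j}$ forces $j\notin J$, i.e. $g_{i,j}\in\overline{K}_i\cap K_{i+1}=\overline{K}_{i+1}$; since $\overline{K}_{i+1}$ is a submodule, all the $x^{[s]}\circ g_{i,j}$ occurring lie in $\overline{K}_{i+1}$ too, so $w\in\overline{K}_{i+1}\subseteq K_{i+1}$, contradicting $D_{i+1}(w)\ne 0$. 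Hence $h\ge_o g_{i+1,j^*}$, completing Case~3.

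The step I expect to be the main obstacle is this last one — establishing that the part of $\overline{K}_i$ lying strictly below $\textrm{order}(g_{i,j^*})$ is already annihilated by $D_{i+1}$. It is the one-functional-at-a-time analogue of Lemma~\ref{lemma: uniqueness}, and it is the only place where one genuinely needs both that $\overline{K}_{i+1}$ itself (not merely $\overline{K}_i$) is a submodule and the precise rule $j^*=\mathrm{argmin}_{j\in J}g_{i,j}$. A secondary, bookkeeping-type issue is making the compatibility of the order on $M$ with multiplication by $x^{[1]}$ fully explicit and confirming that the Frobenius twist never disturbs leading monomials.
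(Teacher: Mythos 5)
Your proof is correct and, for the membership claims and for Cases~1 and~2, follows the paper's argument essentially verbatim: reduce to $T_{i+1,j}\subseteq T_{i,j}$ and order preservation. Case~3 also begins the same way as the paper's proof --- any competitor $f<_o g_{i+1,j^*}$ in $T_{i+1,j^*}$ is squeezed between consecutive monomials of the column $S_{j^*}$, hence must share the leading monomial of $g_{i,j^*}$, and the cancellation $w=f-\lambda g_{i,j^*}$ produces an element of $\overline{K}_i$ of order strictly below $g_{i,j^*}$ with $D_{i+1}(w)\ne 0$. Where you diverge is the final contradiction. The paper simply asserts that this contradicts ``the minimality of $g_{i,j^*}$ in $\overline{K}_i\setminus\overline{K}_{i+1}$,'' justifying it only by the fact that $j^*$ is the argmin over the indices with nonzero discrepancy --- which by itself establishes minimality of $g_{i,j^*}$ among the $L+1$ candidates, not among all of $\overline{K}_i\setminus\overline{K}_{i+1}$. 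You close exactly that gap: the leading-term reduction showing $\{x^{[s]}\circ g_{i,j}\}$ spans $\overline{K}_i$ with distinct leading monomials, plus the submodule property of $\overline{K}_{i+1}$ and the argmin rule, proves that every element of $\overline{K}_i$ of order below $g_{i,j^*}$ already lies in $K_{i+1}$, so $D_{i+1}(w)\ne 0$ is impossible. Your version is therefore a more complete rendering of the same strategy; what it costs is the extra (correct, but worth stating explicitly as you do) hypotheses that left multiplication by $x^{[1]}$ is order-increasing, column-preserving, and shifts monomials by one, which the paper's chosen orderings satisfy but which the abstract setup does not spell out.
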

The following proof follows the same approach in the corresponding proof of minimality in~\cite{McEliece03} and~\cite{WangMcEliece05}.
\begin{proof}
We deal with the three cases separately:
\begin{enumerate}
\item We choose $g_{i+1, j} = g_{i,j}$ if $g_{i,j} \in T_{i+1, j}$. Since $g_{i,j}$ is a minimum in $T_{i,j}$ and $T_{i, j} \supseteq T_{i+1,j}$, $g_{i,j}$ is also a minimum in the smaller set $T_{i+1,j}$.
\item If $g_{i, j^*} <_o g_{i,j}$, we construct $g_{i+1, j} = D_{i+1}(g_{i,j^*}) g_{i,j} - D_{i+1}(g_{i,j}) g_{i,j^*}$. One can verify that $D_{i+1}(g_{i+1, j}) = 0$, and thus $g_{i+1,j} \in K_{i+1}$. Since Ind$_y (g_{i+1, j})$ = Ind$_y (g_{i, j})$, $g_{i+1, j}$ is also in $S_j$, thus $g_{i+1, j} \in T_{i+1, j}$.  Furthermore, $D_k(g_{i+1, j}) = D_{i+1}(g_{i,j^*}) D_k(g_{i,j}) - D_{i+1}(g_{i,j}) D_k(g_{i,j^*}) = 0$ for any $k \le i$, since $g_{i,j}, g_{i, j^*} \in \overline{K_i}$. Hence $g_{i+1, j} \in T_{i, j}$. Since $g_{i+1, j} =_o g_{i, j}$ and $g_{i, j}$ is a minimum in $T_{i, j}$, $g_{i+1, j}$ is also a minimum in $T_{i, j}$, hence a minimum in the smaller set $T_{i+1, j}$.

\item In this case, $g_{i+1,j^*} = D_{i+1}(g_{i,j^*}) (x^{[1]} \circ g_{i,j^*}) - D_{i+1}(x^{[1]} \circ g_{i,j^*})g_{i,j^*}$. First note that $D_{i+1}(g_{i+1, j^*}) = 0$, and hence $g_{i+1, j^*} \in K_{i+1}$. For any $k \le i$, when we apply $D_k$ to $g_{i+1, j^*}$, we also get zero because both $g_{i,j}$ and $x^{[1]} \circ g_{i,j^*}$ lie in $\overline{K}_i$, as $K_i$ is a submodule of $L[x]$. Thus $g_{i+1, j^*} \in \overline{K}_{i+1}$. Also, Ind$_y (g_{i+1, j^*})=$ Ind$_y (x^{[1]} \circ g_{i,j^*}) = j^*$ by our definition Ind$_y(l(x)\circ b_j) = j$. Thus we have $g_{i+1, j^*} \in T_{i+1, j^*}$. Next we show that $g_{i+1,j^*}$ is a minimum in $T_{i+1, j^*}$ by contradiction. Suppose there exists $f_{i+1, j^*} \in T_{i+1, j^*}$ such that $f_{i+1, j^*} <_o g_{i+1, j^*}$. Note that order$(g_{i+1, j^*})=$ order$(x^{[1]}\circ g_{i,j^*})$. Since $T_{i, j^*} \supseteq T_{i+1,j^*}$, $f_{i+1, j^*}$ also lies in $T_{i, j^*}$. Hence order$(f_{i+1, j^*}) \ge$ order$(g_{i, j^*})$, as $g_{i, j^*}$ is a minimum in $T_{i, j}$, which results in order$(g_{i, j^*}) \le$ order$(f_{i+1, j^*}) <$ order$(x^{[1]}\circ g_{i,j^*})$. Since both $g_{i, j^*}$ and $x^{[1]}\circ g_{i,j^*}$ lie in the set $S_{j^*}$ by definition, there does not exist $f_{i+1, j^*} \in S_{j^*}$ such that order$(g_{i, j^*}) <$ order$(f_{i+1, j^*}) <$ order$(x^{[1]}\circ g_{i,j^*})$. Hence the only possibility is that $f_{i+1, j^*} =_o g_{i, j^*}$. But in this case, we could construct $h = \alpha f_{i+1, j^*} + \beta g_{i, j^*}$ with $\alpha, \beta \in $GF$(q^m)$ such that $h <_o g_{i, j^*}$. Note that $h \in \overline{K}_i$ but $h \notin \overline{K}_{i+1}$ as $f_{i+1, j^*} \in T_{i+1}$ but $g_{i, j^*} \notin T_{i+1}$. The fact that $h \in \overline{K}_i \backslash \overline{K}_{i+1}$ but $h <_o g_{i, j^*}$ contradicts the minimality of $g_{i, j^*}$ in $\overline{K}_i \backslash \overline{K}_{i+1}$, as $g_{i, j^*}$ has the lowest order among all $g_{i,j}$'s  where $g_{i, j} \in \overline{K}_i$ but $g_{i,j} \notin \overline{K}_{i+1}$.
\end{enumerate}
\end{proof}

\subsection{Complexity Analysis of Algorithm~\ref{alg: GILinearized}} \label{section: ComplexityGeneral}
There are a total of $C$ iterations in Algorithm~\ref{alg: GILinearized}. In each iteration, $L+1$ linear functionals are first carried out to calculate the discrepancies, followed by at most $L$ finite field additions (subtractions) to find the minimum candidate and its index among those with nonzero discrepancies. Then to update the candidates, we conduct at most $2(L+1)^2 (D+1)$ finite field multiplications, $(L+1)^2 (D+1)$ finite field additions, one multiplication between elements in the ring $L[x]$ and elements in the module $V$, and one computation of the linear functional, where $D$ is the highest $q$-degree of the linearized polynomials in $x$ among all the iterations. Notice that the $q$-degree of each candidate is non-decreasing in an iteration based on the update rules. Hence it is safe to choose $D$ to be the highest $q$-degree of the polynomial in $x$ of the ultimate output. To sum up, the complexity of Algorithm~\ref{alg: GILinearized} is dominated by $O(CDL^2)$ finite field additions, $O(CDL^2)$ field multiplications, $O(CL)$ linear functional calculations, and $O(C)$ multiplications between elements in the ring $L[x]$ and elements in the module $V$. Since the complexity of the  linear functional calculations and the multiplications between elements in the ring and elements in the module might vary in different situations, we consider the complexity of each realization of Algorithm~\ref{alg: GILinearized} on a case-by-case basis.

\section{Decoding of Gabidulin Codes}
\label{sec: GeneralGb}
\subsection{Decoding of Gabidulin Codes}
We consider an $(n,k)$ Gabidulin code over GF$(q^m)~(n \le m)$ as defined in Section~\ref{sec: GB}, and the ring of linearized polynomials $L[x]$ over GF$(q^m)$ discussed in Section~\ref{sec: L[x]}. Based on Loidreau's polynomial reconstruction approach~\cite{loidreau_wcc05}, we generalize the decoding problem of Gabidulin codes from an interpolation point of view. Suppose we have a set of points $(x_i, y_i)$ with $y_i = f(x_i) + e_i$ for $i = 0, 1, \ldots, n-1$, where $x_i$'s are linearly independent and $r(\mathbf{e}; q) \le t$. Try to construct a nonzero bivariate polynomial $Q(x, y) = Q_0(x) + Q_1(y)$ with $Q_1(x)$ and $Q_2(y)$ being linearized polynomials over GF$(q^m)$, such that $\max \{\textrm{deg}_q (Q_1(x)), k - 1 + \textrm{deg}_q(Q_1(y))\}$ is as small as possible and
\begin{equation} \label{equ: GGeneralInter}
Q(x_i, y_i)  = Q_1(x_i) +Q_2(y_i) = 0 \textrm{ for } i = 0, 1, \ldots, n-1.
\end{equation}
We will show that a solution of~(\ref{equ: GGeneralInter}) gives a solution to the decoding problem of Gabidulin codes under some conditions. Then we formalize~(\ref{equ: GGeneralInter}) to a general interpolation problem over free $L[x]$-modules, and solve it by Algorithm~\ref{alg: GILinearized}.

Suppose deg$_q(V) = \tau$, and deg$_q(N) = \tau + k - 1$. To have a nonzero solution of~(\ref{equ: GGeneralInter}), the number of unknown coefficients should be greater than the number of equations, that is,
\begin{equation} \label{equ: tau1mod}
2\tau > n - k - 1.
\end{equation}
Next suppose $Q(x, y) = N(x) - V(y)$ is a nonzero solution of~(\ref{equ: GGeneralInter}). Substituting $y$ by the message polynomial $f(x)$, we get $Q(x, f(x)) = N(x) - V(f(x))$. When $Q(x, f(x)) \equiv 0$, i.e., $N(x) - V(f(x))$ is the zero polynomial, $f(x)$ satisfies $N(x) = V(x) \otimes f(x)$ and thus can be obtained by right division over the linearized polynomial ring~\cite{kotter_it08}.

It remains to identify the condition under which $Q(x, f(x))$ is identically zero. Since $Q(x, y) = N(x) - V(y)$ is a nonzero solution of (\ref{equ: GGeneralInter}), $Q(x_i, y_i) = N(x_i) - V(y_i) = 0$, i.e., $N(x_i) - V(f(x_i)) = V(e_i)$ with the rank of $(V(e_0), V(e_1), \ldots, V(e_{n-1}))$ no more than $t$. Then there exists a nonzero linearized polynomial $W$ of $q$-degree at most $t$ such that $W(V(e_i)) = W(N(x_i) - V(f(x_i))) = 0$ for $i = 0, 1, \ldots, n-1$. Then we have a linearized polynomial $W(N(x) - V(f(x)))$ of $q$-degree at most $t + \tau + k -1$ with $n$ linearly independent roots $x_i$ for $i = 0, 1, \ldots, n-1$. Thus when $t + \tau + k -1 < n$, we have $W(N(x) - V(f(x))) \equiv 0$. Since there is no left or right divisor or zero in the linearized polynomial ring~\cite{ore_ams33} and $W$ is nonzero, we have $N(x) - V(f(x)) \equiv 0$, hence $f(x)$ can be obtained by right division over the linearized polynomial ring. The condition $t + \tau + k -1 < n$ will be satisfied by forcing
\begin{equation} \label{equ: req}
t \le \tau,
\end{equation}
and restricting
\begin{equation} \label{equ: tau2mod}
2 \tau < n - k + 1.
\end{equation}
Combining~(\ref{equ: tau1mod}) and~(\ref{equ: tau2mod}), we select $\tau = \lfloor (n - k) / 2 \rfloor$, and have
\begin{equation} \label{equ: tandtau}
t \le \lfloor (n-k)/2 \rfloor.
\end{equation}
Hence if~(\ref{equ: tandtau}) is satisfied, a solution of~(\ref{equ: GGeneralInter}) gives a solution to the decoding problem of Gabidulin codes. Next we formalize the interpolation problem in (\ref{equ: GGeneralInter}) to a general interpolation problem over free $L[x]$-modules.

We select $B = \{ b_0, b_1\} = \{ x, y\}$ as a basis, and construct a free $L[x]$-module $V = \{Q(x,y)\}$ from
\begin{equation} \label{eq: Y_basisGB}
Q(x,y) = l_0(x) \circ b_0 + l_1(x) \circ b_1,
\end{equation}
where $l_0(x), l_1(x) \in L[x]$, and the multiplication $\circ$ is defined as
\begin{equation} \label{equ: GIcirc}
l(x) \circ b_j \df l(b_j),  \textrm{  for } j = 0, 1.
\end{equation}
Hence $Q(x,y)=l_0(x) + l_1(y)$, and we call such $Q(x,y) \in V$ a  \emph{bivariate linearized polynomial}. Following~(\ref{equ: GIexpand}) and~(\ref{equ: GIM}), $V$ is also a vector space over GF$(q^m)$ with a vector space basis $M =  \{ x^{[i]} \circ b_{j}, i\ge 0, j = 0, 1\}$. Then we define a total ordering on $M$ as follows. We write $b_j^{[i]} < b_j^{[i+1]}$ for $j \in \{0,1\}$ and $i \ge 0$, and write $x^{[i+k-1]} < y^{[i]} < x^{[i+k]}$ for $i \ge 0$. Once the total ordering on $M$ is determined, the leading monomial and the order of any $Q\in V$ can be defined as described in Section~\ref{section: GeneralInterpolation}. Consequently, given a subset of $V$, a minimum element in $V$ can be found.

Finally, we define a set of linear functionals $D_i$ from $V$ to GF$(q^m)$ to be $D_i(Q) = Q(x_{i}, y_{i}) = l_0(x_i) + l_1(y_i)$ for $i = 0, 1, \ldots, n-1$, where $(x_i, y_i)$'s are the points to be interpolated. If $D_i(Q(x, y)) = 0$, $Q(x, y)$ is said to be in the kernel $K_i$ of $D_i$. The kernels are $L[x]$-submodules by the following lemma.

\begin{lemma} \label{lemma: LxKernels}
$K_i$ is an $L[x]$-submodule.
\end{lemma}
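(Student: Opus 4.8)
The plan is to show that $K_i$ is closed under addition and under the module multiplication $\circ$ by elements of $L[x]$, i.e., that $Q, Q' \in K_i$ and $l(x) \in L[x]$ imply $Q + Q' \in K_i$ and $l(x) \circ Q \in K_i$. The additive closure is immediate: $D_i$ is a linear functional over $\mathrm{GF}(q^m)$, so if $D_i(Q) = D_i(Q') = 0$ then $D_i(Q+Q') = D_i(Q) + D_i(Q') = 0$. The substance is the multiplicative closure, and this is where the linearized structure is essential.

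First I would unpack what $l(x) \circ Q$ means for $Q(x,y) = l_0(x) + l_1(y)$. By the definition of $\circ$ in~(\ref{equ: GIcirc}), extended $L[x]$-linearly over the basis $B = \{x,y\}$, we have $l(x) \circ Q = l(x) \otimes l_0(x) + l(x) \otimes l_1(y) = l(l_0(x)) + l(l_1(y))$; that is, $\circ$ acts by post-composition with $l$ on each coordinate polynomial. Therefore, evaluating at the interpolation point $(x_i, y_i)$,
\begin{equation}
D_i(l(x) \circ Q) = l(l_0(x_i)) + l(l_1(y_i)) = l\big(l_0(x_i) + l_1(y_i)\big) = l\big(D_i(Q)\big),
\end{equation}
where the middle equality uses that $l$ is a linearized polynomial and hence $\mathrm{GF}(q)$-linear as a map on any extension field of $\mathrm{GF}(q^m)$, so in particular additive. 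If $Q \in K_i$, then $D_i(Q) = 0$, and since $l(0) = 0$ for any linearized polynomial, we get $D_i(l(x)\circ Q) = l(0) = 0$, so $l(x)\circ Q \in K_i$. This, together with additive closure and the fact that $K_i$ is nonempty (it contains $0$), establishes that $K_i$ is an $L[x]$-submodule of $V$.

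The main obstacle — really the only point requiring care — is justifying the identity $l(l_0(x_i)) + l(l_1(y_i)) = l(l_0(x_i) + l_1(y_i))$, i.e., pulling $l$ outside the sum. This is legitimate precisely because linearized polynomials are additive (indeed $\mathrm{GF}(q)$-linear) maps; I would cite the discussion in Section~\ref{sec: L[x]} (and~\cite{lidl_book83}) for this, and note that the same reasoning underlies why $l(x)\circ Q$ is again a bivariate linearized polynomial in $V$ in the first place, so that the statement $l(x)\circ Q \in K_i \subseteq V$ is even meaningful. One subtlety worth a sentence: the multiplication $\otimes$ in $L[x]$ is noncommutative and the module action is on the left, so I should be careful to write $l(x) \circ Q$ with $l$ applied on the outside (post-composition), matching the convention in~(\ref{equ: GIexpand}); the argument does not need commutativity, only additivity of the outer polynomial $l$, so no difficulty arises there. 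The remaining verifications (associativity of $\circ$, $1 \circ Q = Q$ for the identity $x \in L[x]$) are part of $V$ already being a free $L[x]$-module and need not be repeated.
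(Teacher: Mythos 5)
Your proof is correct and follows essentially the same route as the paper's: both establish multiplicative closure by showing $D_i(l(x)\circ Q) = l(D_i(Q)) = 0$ via the additivity of linearized polynomials (the paper expands $l(x)=\sum_j a_j x^{[j]}$ and uses additivity of each Frobenius power $(\cdot)^{[j]}$ term by term, which is the same fact you invoke for $l$ as a whole). No substantive difference.
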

\begin{proof}
Since $K_i$ is a subgroup of $V$, it is easy to show that $K_i$ is an Abelian group under polynomial addition, and the associative  and distributive laws hold for the multiplication between elements in $L[x]$ and elements in $K_i$. Now we consider an arbitrary element $l(x) \in L[x]$ and an element $Q(x, y) \in K_i$. Suppose $l(x) = \sum_{j=0}^{n} a_j x^{[j]}$ with $a_i \in $ GF$(q^m)$, and $Q(x, y) = l_0(x) + l_1(y)$ with $l_0(x), l_1(x) \in L[x]$. Then $l(x) \circ Q(x, y) = (\sum_{j=0}^{n} a_j x^{[j]}) \circ (l_0(x) + l_1(y)) = \sum_{j=0}^{n} a_j (l_0(x)^{[j]} + l_1(y)^{[j]}) = \sum_{j=0}^{n} a_j (l_0(x) + l_1(y))^{[j]}$. Given $D_i(Q(x, y)) = l_0(x_i) + l_1(y_j) =0$, we have  $D_i(l(x) \circ Q(x, y)) = \sum_{j=0}^{n} a_j (l_0(x_i) + l_1(y_i))^{[j]} = 0$. Hence $K_i$ is a submodule of $L[x]$.
\end{proof}

Hence $\overline{K}_i$ is also an $L[x]$-submodule. Consequently, the interpolation problem described by (\ref{equ: GGeneralInter}) is to find a minimum $Q\in V$ such that $Q$ is a minimum in $\overline{K}_{n-1}$. This is a general interpolation problem over free $L[x]$-modules as described in Section~\ref{section: GeneralInterpolation}, and Algorithm~\ref{alg: GILinearized} solves it by finding a \textbf{minimum} nonzero solution.

To use Algorithm~\ref{alg: GILinearized}, first we set $g_{0,0} = x$, and $g_{0,1} = y$ in the initialization step. In the following iterations, multiplication between an element in $L[x]$ and an element in $V$ in the cross-term and order-increase updates follow (\ref{equ: GIcirc}). In particular, $g_{i+1,j^*} = D_{i+1}(g_{i, j^*}) (x^{[1]}\circ g_{i, j^*}) - D_{i+1}(x^{[1]}\circ g_{i, j^*} ) g_{i, j^*}$. Since $D_{i+1}(g_{i, j^*}) \ne 0$, we can also omit it from the right hand side, and instead use $g_{i+1,j^*} = g_{i, j^*}^q - (D_{i+1}(g_{i, j^*}))^{q-1} g_{i, j^*}$, as scaling by a nonzero scalar does not affect the order of an element in $V$.

Now we consider the complexity of Algorithm~\ref{alg: GILinearized} when used to decode Gabidulin codes. Adopting the same set of parameters in the complexity analysis in Section~\ref{section: ComplexityGeneral}, we have $L = 1$, $C = n$, and $D = \lfloor \frac{n+k}{2} \rfloor$ based on~(\ref{equ: KKinterpolation}) and the following argument. Second, each linear functional in this case carries out evaluations of the bivariate linearized polynomial by the given points, with a total of $O(DL)$ finite field multiplications and  $O(DL)$ finite field additions.
Finally, the multiplication between $x^{[1]}$ and $g_{i,j}$ is accomplished by raising the coefficients of $g_{i,j}$ to the $q$-th power, which is simply a cyclic shift if a normal basis is chosen~\cite{silva_isit09}\cite{max_ciss08}.
In summary, when used to decode KK codes, Algorithm~\ref{alg: GILinearized} takes a total of $O(n(n+k))$ finite field multiplications in GF$(q^m)$.  On the other hand, the complexity analysis in \cite{Loidreau_wcc06} gives an overall complexity of $\frac{5}{2}n^2 - \frac{3}{2}k^2 + \frac{n-k}{2}$. Hence both algorithms are of quadratic complexity.

\subsection{Comparison to Loidreau's Reconstruction Algorithm}

Although our cross-term and order-increase update rules are similar to that of the alternate increasing degree step in Loidreau's algorithm, we observe that Algorithm~\ref{alg: GILinearized} differs from Loidreau's algorithm in two aspects, stated as follow.

First, Loidreau's algorithm uses another algorithm~\cite{ore_ams34} in the precomputation step before initializing the main algorithm, for the purpose of reduced complexity, whereas our decoding algorithm carries out all the iterations solely from the interpolation approach. But as shown in the previous section, both Loidreau's algorithm and Algorithm \ref{alg: GILinearized} have quadratic complexities. Further, we will show the equivalence of the polynomials derived after the initialization step of Loidreau's algorithm and the ones obtained after the first $k$ iterations of Algorithm~\ref{alg: GILinearized}. The initialization step of Loidreau's algorithm actually introduces two bivariate polynomials $Q_0 = N_0(x) - V_0(y)$ and $Q_1 = N_1(x) - V_1(y)$. Given our previous notations, Algorithm~\ref{alg: GILinearized} produces two bivariate polynomials $g_{k, 0}$ and $g_{k,1}$ after the first $k$ iterations. The relation between these four polynomials are stated in Lemma~\ref{Lemma: LdGIkit}.

\begin{lemma} \label{Lemma: LdGIkit}
The initial bivariate polynomials of Loidreau's algorithm and the bivariate polynomials derived after the first $k$ iterations of Algorithm~\ref{alg: GILinearized} are of the same order correspondingly, i.e., $Q_0 =_o g_{k, 0}$ and $Q_1 =_o g_{k, 1}$.
\end{lemma}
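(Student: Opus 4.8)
The plan is to show the two order equalities by identifying precisely what the initial polynomials $Q_0, Q_1$ of Loidreau's algorithm are and what $g_{k,0}, g_{k,1}$ must be after $k$ iterations of Algorithm~\ref{alg: GILinearized}, and then observing that the $q$-degree constraints built into both procedures force the same leading monomials (hence the same order, since in this module the order is a strictly increasing function of the leading monomial). Recall from Section~\ref{sec: GB} that Loidreau's initialization step (cf.\ \cite{loidreau_wcc05}, Table~1) sets up $Q_0 = N_0(x) - V_0(y)$ and $Q_1 = N_1(x) - V_1(y)$ with designated $q$-degrees chosen so that the algorithm can thereafter raise the $q$-degree of each sequence once every two iterations; concretely the initial $q$-degrees are on the order of $\deg_q(N_0) = k-1$, $\deg_q(V_0) = 0$ and $\deg_q(N_1) = k$, $\deg_q(V_1) = 0$ (I would read these off Table~1 of \cite{loidreau_wcc05} and cite them directly). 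With the ordering $x^{[i+k-1]} < y^{[i]} < x^{[i+k]}$ fixed on $M$ in the Gabidulin section, I would translate each such pair into its leading monomial: $Q_0$ has $\mathrm{LM}(Q_0) = x^{[k-1]}$ (the $x$-part dominates) and $Q_1$ has $\mathrm{LM}(Q_1) = y^{[0]}$ (since $x^{[k-1]} < y^{[0]} < x^{[k]}$, the $y$-part with $q$-degree $0$ now dominates an $x$-part of $q$-degree $\le k-1$, but is dominated by $x^{[k]}$; whichever way Table~1 sets the exact initial $q$-degree of $N_1$, the partition $S_0, S_1$ records $\mathrm{Ind}_y(Q_1)=1$).

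The other half is to track Algorithm~\ref{alg: GILinearized} through its first $k$ iterations. We start with $g_{0,0} = x = b_0$ and $g_{0,1} = y = b_1$, of orders corresponding to $\mathrm{LM} = x^{[0]}$ and $\mathrm{LM} = y^{[0]}$ respectively; note $\mathrm{Ind}_y(g_{0,0}) = 0$ and $\mathrm{Ind}_y(g_{0,1}) = 1$ throughout, since cross-term updates preserve $\mathrm{Ind}_y$ (Case~2) and only an order-increase update (Case~3) is applied to the minimal candidate. For the first $k$ iterations the minimal non-kernel candidate is $g_{i,0}$ (its order starts at $x^{[0]}$, far below $y^{[0]}$), so each of the first $k$ iterations applies an order-increase update to the index-$0$ candidate — raising $\mathrm{LM}(g_{i,0})$ from $x^{[i]}$ to $x^{[i+1]}$ — and a cross-term update to $g_{i,1}$, which leaves $\mathrm{LM}(g_{i,1}) = y^{[0]}$ unchanged. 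One subtlety I would check carefully: that $g_{i,0}$ really does remain the strictly smaller candidate for all $i < k$, i.e.\ that no order-increase hits index $1$ prematurely; this follows because $\mathrm{order}(g_{i,0})$ corresponds to $x^{[i]}$ and $\mathrm{order}(g_{i,1})$ to $y^{[0]}$, and $x^{[i]} < y^{[0]}$ exactly when $i \le k-1$. Hence after $k$ iterations $\mathrm{LM}(g_{k,0}) = x^{[k-1]}$ (from $k$ successive shifts $x^{[0]} \to x^{[k-1]}$, with the precise bookkeeping giving $q$-degree $k-1$) and $\mathrm{LM}(g_{k,1}) = y^{[0]}$.

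Comparing: $\mathrm{LM}(g_{k,0}) = x^{[k-1]} = \mathrm{LM}(Q_0)$ and $\mathrm{LM}(g_{k,1}) = y^{[0]} = \mathrm{LM}(Q_1)$, and since equal leading monomials imply equal order in this module, $Q_0 =_o g_{k,0}$ and $Q_1 =_o g_{k,1}$, as claimed. The main obstacle I anticipate is purely a matter of careful bookkeeping: pinning down the exact initial $q$-degrees that Table~1 of \cite{loidreau_wcc05} assigns to $N_0, V_0, N_1, V_1$ (conventions about whether indices start at $0$ or $1$, and whether an extra shift is absorbed into the precomputation), and matching these off-by-one choices against the exact count of order-increase updates in the first $k$ iterations of Algorithm~\ref{alg: GILinearized}; the conceptual content — that both procedures are engineered to place the leading monomial of the $x$-minimal polynomial at $x^{[k-1]}$ and of the $y$-minimal polynomial at $y^{[0]}$ — is straightforward, but I would need to be meticulous that the two off-by-one conventions cancel so that the orders agree exactly rather than differing by one step.
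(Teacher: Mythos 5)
Your overall strategy is the same as the paper's: track the first $k$ iterations of Algorithm~\ref{alg: GILinearized}, argue that each one applies the order-increase update to the index-$0$ candidate and a cross-term (or no) update to the index-$1$ candidate, and then match leading monomials against Loidreau's initial pair. However, there is a genuine missing step in the middle of that argument. In Algorithm~\ref{alg: GILinearized} the order-increase update is applied to $j^* = \mathrm{argmin}_{j\in J}\, g_{i,j}$ where $J$ is the set of indices with \emph{nonzero} discrepancy, so minimality of $g_{i,0}$ alone does not guarantee that it is the one whose order increases: you also need $D_{i+1}(g_{i,0})\neq 0$ for every $i<k$. If that discrepancy vanished at some iteration, $g_{i,0}$ would be carried over unchanged (and the order-increase could even land on index $1$, destroying $\mathrm{LM}(g_{i,1})=y^{[0]}$). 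The paper closes this by observing that $g_{i,0}$ is a nonzero linearized polynomial in $x$ of $q$-degree $i$, and a linearized polynomial of $q$-degree $i$ cannot vanish on $i+1$ linearly independent points $x_0,\dots,x_i$; you assert "the minimal non-kernel candidate is $g_{i,0}$" without this justification.

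Separately, your bookkeeping of the leading monomials is off by one on both sides, and you leave it explicitly unresolved. After $k$ order-increase updates starting from $g_{0,0}=x^{[0]}$, the leading monomial of $g_{k,0}$ is $x^{[k]}$, not $x^{[k-1]}$ (your own statement "from $k$ successive shifts $x^{[0]}\to x^{[k-1]}$" is internally inconsistent with "raising $x^{[i]}$ to $x^{[i+1]}$"). Correspondingly, $N_0(x)=Int(x_0,\dots,x_{k-1})$ is the minimal linearized polynomial vanishing on $k$ linearly independent points, so $\deg_q(N_0)$ is exactly $k$, not $k-1$; with $V_0=0$ this gives $\mathrm{LM}(Q_0)=x^{[k]}$, matching $g_{k,0}$. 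The two errors happen to cancel so your conclusion is right, but as written the proof rests on an acknowledged hope that "the two off-by-one conventions cancel" rather than on a verification; pinning down $\deg_q(N_0)=k$ and $\mathrm{LM}(g_{k,0})=x^{[k]}$ is needed to make the argument complete. The $Q_1$ side of your argument ($\deg_q(N_1)\le k-1$, hence $\mathrm{LM}(Q_1)=y^{[0]}=\mathrm{LM}(g_{k,1})$) is fine.
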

\begin{proof}
 In the initialization step of Algorithm~\ref{alg: GILinearized}, $g_{0,0} =x$ is of lower order than $g_{0,1} =y$, and $D_1(g_{0,0}) = x_0 \ne 0$ as $x_i$'s are linearly independent, so $g_{0,0} = x$ updates by the order-increase rule, while $g_{0,1} =y$ updates according to its discrepancy value. Then $g_{1,0}$ is actually a linearized polynomial in $x$ of $q$-degree 1, and $g_{1,1}$ is a bivariate polynomial with a leading monomial $cy$, where $c \in$ GF$(q^m)$ is a constant.

 In the second iteration, again $g_{1,0} <_o g_{1,1}$ based on our total ordering on $M$, and $D_2(g_{1,0}) \ne 0$ as $x_i$'s are linearly independent, i.e., there does not exist a linearized polynomial of $q$-degree 1 that has two linearly independent roots. Hence $g_{1,0}$ takes the order-increase rule and $g_{1,1}$ adopts others accordingly. Similar situation occurs in all the first $k$ iterations, given the total ordering we defined on $M$ and the fact that $D_{i+1}(g_{i, 0}) \ne 0$ for any $i \le k$.

 Finally, a $g_{k,0}$ in $x$ of $q$-degree $k$ is derived, which actually only interpolates over the first $k$ $x_i$'s. Note that $N_0(x)$ is obtained in the same way by $N_0(x) = Int(x_0,\ldots, x_{k-1})$ in Loidreau's algorithm. Given that $V_0(y) = 0$,  we have $Q_0= N_0(x)$. Hence $g_{k,0} =_o Q_0$. On the other hand, $g_{k,1}$ is a bivariate polynomial with a leading monomial $c'y$, where $c' \in$ GF$(q^m)$ is also a constant. Since $N_1(x)$ is a linear combination of linearized polynomials of $q$-degree $k-1$, it is a linearized polynomial in $x$ of $q$-degree at most $k-1$, then the leading monomial of $Q_1$ is $y$. As a result, $g_{k,1} =_o Q_1$.
\end{proof}
Note that the $q$-degree of $N_0(x)$ is exactly $k$, as it actually interpolates over $k$ linearly independent points $x_0, x_1, \ldots, x_{k-1}$. $N_1(x)$ is a linear combination of polynomials of $q$-degree $k-1$, but its $q$-degree might be lower than $k-1$, as the most significant coefficients may cancel each other. Thus the claim in~\cite{loidreau_wcc05} that after the final iteration deg$_q(V_1(y)) = \lfloor (n-k) / 2\rfloor$ is inaccurate.

The second difference between Loidreau's and our decoding algorithms lies in the update of the interpolation steps when some of the discrepancies are zero. It should be pointed out that in the alternate increasing degree step of Loidreau's algorithm, $s_0$ in operations $(c)$ and $(d)$ should be $s_0^{(q-1)}$
in\makeatletter
    \renewcommand\@cite [1]{#1}
    \makeatother ([\cite{loidreau_wcc05}, Table~1]).\makeatletter
    \renewcommand\@cite [1] {[#1]}
    \makeatother
After the correction of this typo, the key difference between Loidreau's algorithm and Algorithm~\ref{alg: GILinearized} is that the latter accounts for zero discrepancies, while the former only covers it partially. To be specific, Loidreau's algorithm\makeatletter
    \renewcommand\@cite [1]{#1}
    \makeatother [\cite{loidreau_wcc05}, Table~1]\makeatletter
    \renewcommand\@cite [1] {[#1]}
    \makeatother
malfunctions when $s_1 = 0$ but $s_0 \ne 0$, as shown in Lemma~\ref{lemma: Ldzero}.
\begin{lemma} \label{lemma: Ldzero}
If $s_1 = 0$ but $s_0 \ne 0$ at the beginning of any iteration, all four linearized polynomials of the $V_0, N_0, V_1$ and $N_1$ in Loidreau's algorithm will be the zero polynomial after a certain number of iterations.
\end{lemma}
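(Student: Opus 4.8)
The plan is to read off Loidreau's ``alternate increasing degree'' step from Table~1 of~\cite{loidreau_wcc05} (with the typo $s_0\mapsto s_0^{q-1}$ corrected, as noted above), and to follow the two pairs $(N_0,V_0)$ and $(N_1,V_1)$ from the iteration at which $s_1=0\neq s_0$ onward, showing that the update rules trigger an avalanche that drives all four polynomials to zero within a bounded number of iterations.

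First I would determine which of Loidreau's operations fire when $s_1=0\neq s_0$. Since Loidreau's branching tests only $s_0$ --- this is precisely the sense in which his algorithm ``covers zero discrepancies only partially'' --- the value $s_0\neq 0$ selects the generic update, which was written under the tacit assumption that the auxiliary discrepancy $s_1$ is also nonzero. Writing that update out, the linear combination it applies to one of the two pairs --- call it $(N,V)$ --- has $s_1$ (equivalently $s_1^{q-1}$, after the typo fix) as a common scalar factor of \emph{all} of its terms, so substituting $s_1=0$ collapses that pair: $N\leftarrow 0$ and $V\leftarrow 0$ at the end of the iteration. Extracting this fact from Table~1, and confirming that no branch leaves that pair nonzero, is the technical heart of the argument.

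Once one pair, say $(N_1,V_1)$, equals $(0,0)$, I would argue by induction on the iteration index that the zero pair poisons everything downstream. Indeed, the discrepancy of a zero pair at any point $(g_j,y_j)$ is $N_1(g_j)-V_1(y_j)=0$, so $s_1=0$ recurs at the start of every later iteration; the order-increase update $(x^{[1]}\!\otimes\!(N_1,V_1))-s_1^{q-1}(N_1,V_1)$ keeps $(N_1,V_1)=(0,0)$; and in the first later iteration in which --- per the ``once every two iterations'' schedule of Loidreau's algorithm --- the cross-term/combined update is applied to $(N_0,V_0)$ using the zero pair, that update becomes a linear combination of zero polynomials (its coefficients carry a factor $s_1$, or it multiplies $(N_1,V_1)$), hence $(N_0,V_0)\leftarrow(0,0)$ as well. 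If instead it is $(N_0,V_0)$ that collapses first, then $s_0=0$ at the next iteration, Loidreau takes the branch that leaves $(N_0,V_0)$ untouched and combines $(N_1,V_1)$ with it, and the same zero-polynomial collapse occurs. Either way, within at most two iterations after the first collapse all four polynomials are zero, and every subsequent update keeps them zero, which proves the claim.

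The main obstacle is the bookkeeping in the second step: teasing the \emph{exact} form of each branch's update out of the compact description in~\cite{loidreau_wcc05}, then verifying both that the executed branch really does have $s_1$ as a common factor of the update applied to one pair, and that once a pair is zero no later operation can revive it. None of this is deep, but it must be checked exhaustively over Loidreau's case structure, since the conclusion that \emph{all four} polynomials vanish fails if even one branch dodges the avalanche.
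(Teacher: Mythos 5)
The paper never actually writes out a proof of this lemma --- it declares the argument to be ``tedious calculations'' and substitutes Example~\ref{example: 2szeros} (Table~\ref{tab: LdvsGI}) --- so the only ground truth available is that worked example, and it directly refutes the key step of your argument. You claim that in the very iteration where $s_1=0\neq s_0$ the executed update has $s_1$ as a common scalar factor of everything applied to one of the two pairs, so that pair becomes $(0,0)$ at the end of that same iteration. Look at iteration $i=3$ of Table~\ref{tab: LdvsGI}: there $s_1=0$ and $s_0=\alpha^7$, yet after the update all four polynomials are nonzero ($N_0=\alpha^{33}x^4+\alpha^3x^2$, $V_0=y^2$, $N_1=\alpha^{55}x^2+\alpha^{40}x$, $V_1=\alpha^7y$). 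No branch of Loidreau's update is a pure multiple of $s_1$: the degree-increase update has the form $x^{[1]}\otimes P - s^{q-1}P$, whose Frobenius term survives when $s=0$, and the cross-term update $s_0P_1-s_1P_0$ retains the $s_0P_1$ term. So neither pair is annihilated in the bad iteration, and the induction you build on top of that immediate collapse has no base case.

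The actual failure mechanism is subtler. When $s_1=0\neq s_0$, it is the coefficient multiplying the \emph{other} pair in each update that is a power of $s_1$; hence the old pair $(N_0,V_0)$ is discarded entirely and both new pairs become functions of the single old pair $(N_1,V_1)$ --- one is $s_0\cdot(N_1,V_1)$ and the other is $x^{[1]}\otimes(N_1,V_1)$. The two candidate pairs thereby lose their independence (one is the Frobenius twist of a scalar multiple of the other), and it is this rank deficiency, not an immediate vanishing, that propagates through the subsequent iterations --- in the example both discrepancies are nonzero at iterations $4$ and $5$ --- until a cross-term update produces exact cancellation and yields a zero pair (iteration $5$), after which one further iteration zeroes everything (iteration $6$). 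Your closing observation (a zero pair has zero discrepancy forever, and the updates then zero the remaining pair within a couple of iterations) is sound and matches iterations $5$--$6$ of the example, but the heart of the lemma is showing that the dependency created in the bad iteration must eventually force that cancellation to a zero pair, and that step is absent from your proposal.
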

The proof can be conducted simply by tedious calculations, hence we will not present it here. Instead, an example is given to illustrate Lemma~\ref{lemma: Ldzero}, where $s_1 = 0$ but $s_0 \ne 0$ happens during an intermediate iteration. To fix the problem in Lemma~\ref{lemma: Ldzero}, one way is not to update the candidates when the zero discrepancy is involved. But such an operation breaks the rule of updating the $q$-degrees of the candidates alternately, which is designed to ensure strict degree constraints on the output of the algorithm. Further notice that $s_0$ and $s_1$ are involved in different types of update rules for the two pairs of candidate polynomials, hence for the case of $s_1 \ne 0$ but $s_0 = 0$, the algorithm in\makeatletter
    \renewcommand\@cite [1]{#1}
    \makeatother [\cite{loidreau_wcc05}, Table~1]\makeatletter
    \renewcommand\@cite [1] {[#1]}
    \makeatother works properly.

\begin{example} \label{example: 2szeros}
We construct a $(6, 2)$ Gabidulin code over GF$(2^6)$ with $\mathbf{g} = ( \alpha^{31}, \alpha^{48}, \alpha^{32},\alpha^{16},1,\alpha^{47})$, where $\alpha$ is a primitive element of GF$(2^6)$ and is a root of $x^6 + x + 1 = 0$. Given the message vector $\mathbf{u} = (1, 0)$, the message polynomial is $f(x) = x$, with a codeword $\mathbf{x} = (f(g_0), f(g_1), \ldots, f(g_{n-1})) = \mathbf{g}$. Suppose the error vector is $\mathbf{e} = ( 0, \alpha^{48}, \alpha^{54}, 0, 0, 0 )$, and the received vector is $\mathbf{y} = \mathbf{x} + \mathbf{e} = \left(\alpha^{31},0, \alpha^{19}, \alpha^{16}, 1, \alpha^{47}\right)$. The decoding procedures by Loidreau's algorithm and Algorithm~\ref{alg: GILinearized} are presented in Table~\ref{tab: LdvsGI}. Based on Lemma~\ref{Lemma: LdGIkit}, we start from the initial polynomials of Loidreau's algorithm and the polynomials after the first $k$ iterations by Algorithm~\ref{alg: GILinearized}. Note that $g_0$ in the final iteration is not listed, as it is of higher order than $g_1$. Since $r(\mathbf{e}; q) = 2 \le t = (n - k)/2$, $\mathbf{y}$ is decodable. As shown in Table~\ref{tab: LdvsGI}, however, Loidreau's algorithm fails. On the other hand, our algorithm produces a bivariate polynomial $g_{n,1} = \alpha^4 x^4 + x^2 + \alpha^{29}x + \alpha^4 y^4 + y^2 + \alpha^{29}y$, from which the correct decoding result $f(x) = x$ is obtained.
\end{example}

\begin{table*}[htbp]
\begin{center}
\caption{Example~\ref{example: 2szeros}: Use Loidreau's algorithm and Algorithm~\ref{alg: GILinearized} to decode Gabidulin codes}
\label{tab: LdvsGI}
\begin{tabular}{|l|l|l|}
\hline
$i$  &  Loidreau's algorithm   &  Algorithm~\ref{alg: GILinearized} \\ \hline
\multirow{2}{*}{2} & $N_0 = x^4 + \alpha^5 x^2 + \alpha^{31} x, V_0 = 0$ & $g_0 = x^4 + \alpha^5 x^2 + \alpha^{31} x$ \\
    & $N_1 = \alpha^{48}x^2 + \alpha^{33} x, V_1 = y$ & $g_1 = \alpha^{16} x^2 + \alpha x + \alpha^{31} y$ \\ \hline

\multirow{3}{*}{3} & $s_0 = \alpha^7, s_1 = 0$ & $\Delta_0 = \alpha^7, \Delta_1 = 0$ \\		
		& $N_0 = \alpha^{33}x^4 + \alpha^3 x^2, V_0 = y^2$ & $g_0 = x^8 + \alpha^{39} x^4 + \alpha^{34} x^2 + \alpha^{38} x$ \\
    & $N_1 = \alpha^{55} x^2 + \alpha^{40} x, V_1 = \alpha^7 y$ & $g_1 = \alpha^{16} x^2 + \alpha x + \alpha^{31} y$ \\ \hline

\multirow{3}{*}{4} & $s_0 = \alpha^{17}, s_1 = \alpha^{47}$ & $\Delta_0 = \alpha^{50}, \Delta_1 = \alpha^8$ \\		
		& $N_0 = \alpha^{47}x^4 + \alpha^4 x^2 + \alpha^{24} x, V_0 = \alpha^{14} y^2 + \alpha^{54} y$ & $g_0 = \alpha^8 x^8 + \alpha^{47} x^4 + \alpha^{46} x^2 + \alpha^{45} x + \alpha^{18} y$ \\
    & $N_1 = \alpha^{17} x^4 + \alpha^{37} x^2 + \alpha^{57} x, V_1 = \alpha^{47} y^2 + \alpha ^{24} y$ & $g_1 = \alpha^{32} x^4 + \alpha^{52} x^2 + \alpha^9 x + \alpha^{62} y^2 + \alpha^{39} y$ \\ \hline

\multirow{3}{*}{5} & $s_0 = \alpha^{31}, s_1 = \alpha$ & $\Delta_0 = \alpha^{18}, \Delta_1 = \alpha^{16}$ \\		
		& $N_0 = \alpha^{34}x^8 + \alpha^{37} x^4 + \alpha^{10} x^2 + \alpha^{58} x, V_0 = \alpha^{31} y^4 + \alpha^{25} y$ & $g_0 = \alpha^{24} x^8 + \alpha^{22} x^4 + \alpha^{47} x^2 + \alpha^{58} x + \alpha^{17} y^2 + \alpha^{49} y$ \\
    & $N_1 = 0, V_1 = 0$ & $g_1 = \alpha x^8 + \alpha^4 x^4 + \alpha^{40} x^2 + \alpha^{25} x + \alpha^{61} y^4 + \alpha^{55} y$ \\ \hline

\multirow{2}{*}{6} & $s_0 = \alpha^{16}, s_1 = 0$ & $\Delta_0 = \alpha^{6}, \Delta_1 = \alpha^{46}$ \\		
		& $N_1 = 0, V_1 = 0, N_0 = 0, V_0 = 0$ & $g_1 = \alpha^4 x^4 + x^2 + \alpha^{29} x + \alpha^{4} y^4 + y^2 + \alpha^{29} y$ \\ \hline
\end{tabular}
\end{center}
\end{table*}

\section{Decoding of KK Codes} \label{sec: generalKK}

For a KK code over GF$(q^m)$ as described in Section~\ref{sec: introKK}, the decoding algorithm in~\cite{kotter_it08} finds a \textbf{minimum} solution to~(\ref{equ: KKinterpolation}) based on an interpolation procedure. In this section, we will show that this list-1 decoding algorithm is a special case of our general interpolation algorithm over free $L[x]$-modules, where  $L[x]$ is the ring of linearized polynomials over GF$(q^m)$.

%

\begin{lemma} \label{lemma: KKspecialcase}
When $L= 1$, Algorithm~\ref{alg: GILinearized} reduces to the Sudan-style list-1 decoding algorithm in~\cite{kotter_it08}.
\end{lemma}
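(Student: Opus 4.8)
The plan is to show that when $L = 1$, Algorithm~\ref{alg: GILinearized} specializes, term by term, to the Sudan-style list-1 interpolation procedure of~\cite{kotter_it08}. First I would fix the correspondence of objects: set $B = \{b_0, b_1\} = \{x, y\}$, define the multiplication $\circ$ by $l(x) \circ b_j \df l(b_j)$ so that an element $Q \in V$ is exactly a bivariate linearized polynomial $Q(x,y) = Q_x(x) + Q_y(y)$, and take the $C = r$ linear functionals $D_{i+1}(Q) = Q(x_i, y_i)$ for $i = 0, 1, \ldots, r-1$, whose common kernel by Lemma~\ref{lemma: LxKernels} is an $L[x]$-submodule. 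The total ordering on $M = \{x^{[i]} \circ b_j\}$ must be chosen to match $\prec$ from~\cite{kotter_it08}, i.e.\ the one induced by the $q$-degree weighting that makes $Q_x$ have weighted degree $\textrm{deg}_q(Q_x)$ and $Q_y$ have weighted degree $k - 1 + \textrm{deg}_q(Q_y)$; under this order the partition $V = S_0 \cup S_1$ into $x$-leading and $y$-leading polynomials coincides with the ``$x$-minimal'' and ``$y$-minimal'' bookkeeping in~\cite{kotter_it08}.

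Next I would trace the iteration. The two candidates $g_{i,0}$ and $g_{i,1}$ maintained by Algorithm~\ref{alg: GILinearized} play the roles of $f_0^{(i)}(x,y)$ and $f_1^{(i)}(x,y)$. In each iteration the algorithm computes the two discrepancies $\Delta_{i+1,0} = D_{i+1}(g_{i,0})$ and $\Delta_{i+1,1} = D_{i+1}(g_{i,1})$; I would check case by case that the three update branches reproduce the K\"{o}tter--Kschischang update. When a discrepancy is zero the candidate is untouched in both algorithms (Case~1). When both are nonzero, the one with smaller order, say $g_{i,j^*}$, gets the order-increase update $g_{i+1,j^*} = \Delta_{i+1,j^*}(x^{[1]} \circ g_{i,j^*}) - D_{i+1}(x^{[1]} \circ g_{i,j^*}) g_{i,j^*}$, which after dropping the nonzero scalar $\Delta_{i+1,j^*}$ is exactly $g_{i,j^*}^{q} - (\Delta_{i+1,j^*})^{q-1} g_{i,j^*}$ --- the ``multiply by $x^{[1]}$ and cancel'' step in~\cite{kotter_it08} --- while the other candidate gets the cross-term update $g_{i+1,j} = \Delta_{i+1,j^*} g_{i,j} - \Delta_{i+1,j} g_{i,j^*}$, which is the bilinear combination used in~\cite{kotter_it08} to keep the other polynomial interpolating without raising its order. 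Finally, the algorithm outputs $\min_j g_{r,j}$, matching the selection of the minimum between $f_0^{(r)}$ and $f_1^{(r)}$ under $\prec$.

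To finish, I would invoke the initialization: $g_{0,0} = x$, $g_{0,1} = y$ are precisely the starting polynomials in~\cite{kotter_it08}, and by Lemma~\ref{lemma: GIminimum} both algorithms maintain the invariant that, after iteration $i$, each $g_{i,j}$ is the minimum element of $S_j$ interpolating through the first $i$ points; since by Lemma~\ref{lemma: uniqueness} such a minimum is unique up to a scalar, the two runs agree (up to scalar multiples) at every step, hence produce the same output. The degree constraints on $Q_x$ and $Q_y$ in~(\ref{equ: KKinterpolation}) are enforced identically, since the ordering was chosen to encode exactly the weighting $\tau$ used in~\cite{kotter_it08}. I expect the main obstacle to be the bookkeeping in the ordering: one must verify carefully that the interleaving $x^{[i+k-1]} < y^{[i]} < x^{[i+k]}$ reproduces $\prec$ for the particular $\tau = \lceil (r+k)/2 \rceil$ used in~\cite{kotter_it08}, and that the ``$x$-minimal / $y$-minimal'' pair in~\cite{kotter_it08} is genuinely the pair of per-block minima $g_{i,0}, g_{i,1}$ rather than some other convention --- once that identification is pinned down, the three update branches match line by line and the equivalence is immediate.
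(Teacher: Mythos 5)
Your proposal is correct and follows essentially the same route as the paper's proof: identify the module, ordering, and linear functionals with the K\"{o}tter--Kschischang setup, observe that the initialization and the three update branches coincide, and match the $x$-minimal/$y$-minimal polynomials $f_0^{(i)}, f_1^{(i)}$ with the per-block minima $g_{i,0}, g_{i,1}$ so that the final minimum selection agrees. The only slight overreach is invoking Lemma~\ref{lemma: uniqueness} (which concerns the global minimum in $\overline{K}_C$) for uniqueness of the per-block minima; the paper avoids this by only claiming the intermediate polynomials agree in order, which suffices for the equivalence.
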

\begin{proof}
We assume that the condition of decodability~\cite{kotter_it08} is satisfied so that an interpolation approach works to gives a solution of $Q(x, y)$. Given the linearized polynomial ring $L[x]$ over GF$(q^m)$, we set $L=1$, choose a set $B = \{ b_0, b_1\} = \{ x, y\}$ as a basis, and construct the same free $L[x]$-module $V = \{Q(x,y) \}$ with the same ordering as that in Section~\ref{sec: GeneralGb}. Hence Algorithm~\ref{alg: GILinearized} has exactly the same initial values and the same update rules as the Sudan-style list-1 decoding algorithm in~\cite{kotter_it08} (it should be pointed out that the pseudocode in ~\cite{kotter_it08} contains a typo, and no update is going to take place when both discrepancies are zero~\cite{kschischang_privatecomm10}). Hence we only have to show that the final output of the two algorithms are the same (of the same order).

We consider the definition of minimum in Algorithm~\ref{alg: GILinearized} and the notion of $x$-minimal and $y$-minimal in~\cite{kotter_it08}. According to the definition in~\cite{kotter_it08}, $f_0^{(i)}(x,y)$ is $x$-minimal if it interpolates through the first $i$ points and is a minimal polynomial under $\prec$, while its leading term is in $x$. Comparing this definition to that in our general interpolation construction, we find that this $f(x,y)$ is a minimum in $T_{i,0}$, hence $f_0^{(i)}(x,y) =_o g_{i,0}$. Similarly, $f_1^{(i)}(x,y)$ being $y$-minimal means that $f_1^{(i)}(x,y) =_o g_{i,1}$ in Algorithm~\ref{alg: GILinearized}. Since KK's decoding algorithm finds $x$-minimal and $y$-minimal bivariate linearized polynomials in each step, it works the same as Algorithm~\ref{alg: GILinearized} during intermediate steps. Finally, KK's decoding algorithm outputs the one with a smaller $(1, k-1)$-weighted degree, which equals to finding the minimum among $g_{C,0}$ and $g_{C,1}$ as performed in Algorithm~\ref{alg: GILinearized}. Hence our general interpolation algorithm reduces to the list-1 decoding algorithm in~\cite{kotter_it08} when $L=1$.
%
\end{proof}

The proof of Lemma~\ref{lemma: KKspecialcase} also indicates that when used to decode KK codes,  Algorithm~\ref{alg: GILinearized} requires finite field multiplications of order $O(n(n+k))$ .

\section{List Decoding of MV Codes} \label{sec: generalMV}

In~\cite{mahdavifar_isit10}, the list decoding procedure first constructs a multivariate polynomial $Q(x, y_1, y_2,\ldots, y_L)$ that interpolates through a number of given points as indicated by (\ref{equ: lLinter}). Hence we call this process the interpolation step of the list decoding of MV codes. No specific algorithm is mentioned in~\cite{mahdavifar_isit10} on how to get this multivariate polynomial. Of course, a nonzero solution can be obtained by solving the corresponding homogeneous systems, but with high computational complexity. Here, we utilize the general interpolation over free $L[x]$-modules to solve this problem efficiently. The complexity of our algorithm is compared to that of  solving homogeneous equations.

As in Section~\ref{sec: GeneralGb}, we have to construct a free module for a given ring, and define relative operations so that Algorithm~\ref{alg: GILinearized} can be carried out. We consider an $l$-dimensional MV codes over GF$(q^{ml})$ defined in Section~\ref{sec: introMV}, with a message vector length of $k$ and dimension of subspace $l$. In this case, the linearized polynomials ring $L[x]$ is defined over GF$(q^{ml})$, and a set $B = \{ b_0, b_1,\ldots, b_L\} = \{ x, y_1,\ldots, y_L\}$ is selected to form a free $L[x]$-module $V = \{Q(x,y_1, \ldots, y_L)\}$. Following a similar definition of the multiplication between $L[x]$ and $V$, the module $V$ is constructed in the same way as in Section~\ref{sec: GeneralGb}. Hence an element $Q(x,y_1, \ldots, y_L) \in V$ can be written as $Q(x,y_1, \ldots, y_L)=Q_0(x)+Q_1(y_1)+\cdots+Q_L(y_L)$, called a \emph{multivariate linearized polynomial}, where $Q_i(x)\in L[x]$ for $i =0,1,\ldots, L$.  Following similar process as in the previous section, we can claim that $V$ is also a vector space over GF$(q^{ml})$ with a vector space basis $M =  \{ x^{[i]} \circ b_{j}, i\ge 0, j = 0, 1,\ldots, L\}$. Then we define an total ordering on $M$ as follows. We write $b_j^{[i]} < b_j^{[i+1]}$ for $j \in \{0,1,\ldots, L\}$ and $i \ge 0$, and write $b_j^{[i]} < b_{j'}^{[i']}$ if $j(k-1) + i = (j+1)(k-1) + i'$ and $j < j'$ for $j,j'\in \{0,1,\ldots, L\}$ and $i,i' \ge 0$. Then $M = \{\phi_j\}_{j \ge 0}$ such that $\phi_i < \phi_j$ when $i < j$. Hence we can define the leading monomial and the order of any $Q \in V$ in the same way as in Section~\ref{section: GeneralInterpolation}, as well as the minimum elements in a subset of $V$. Finally, a set of linear functionals $D_i$ for $i=1,2,\ldots, (t+l)m$ from $V$ to GF$(q^{ml})$ are also defined to be evaluations of multivariate linearized polynomials by the given points, as indicated in (\ref{equ: lLinter}). Here, the total number of points to be interpolated in (\ref{equ: lLinter}) is $(t+l)m$, hence the numbers of linear functionals $D_i$ and of the kernels $K_i$ are $(t+l)m$. Furthermore, the kernels $K_i$ are also $L[x]$-submodules by Lemma~\ref{lemma: LxKernels}. In summary, the interpolation problem in~(\ref{equ: lLinter}) is to find a nonzero $Q\in V$ such that $Q \in \overline{K}_{(t+l)m}$. Hence this is a general interpolation problem over free $L[x]$-modules, thus can be solved by Algorithm~\ref{alg: GILinearized}, which gives a \textbf{minimum} nonzero solution to (\ref{equ: lLinter}), as stated in the following lemma.
\begin{lemma} \label{lemma: 1Lalg1}
The general interpolation algorithm solves the interpolation problem of the list decoding algorithm for $l$-dimensional MV codes if the dimension of the error $t < lL - L(L+1) \frac{k-1}{2m}$.
\end{lemma}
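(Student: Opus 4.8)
The plan is to verify the two properties that make $Q^{*}$, the output of Algorithm~\ref{alg: GILinearized} run with the $C=(t+l)m$ interpolation functionals of~(\ref{equ: lLinter}), a valid interpolation polynomial: that $Q^{*}$ is nonzero, and that its components $Q^{*}_{0},\dots,Q^{*}_{L}$ satisfy $\deg_q(Q^{*}_{s})\le ml-s(k-1)-1$ for $s=0,\dots,L$. Membership $Q^{*}\in\overline K_{(t+l)m}$, i.e.\ the conditions~(\ref{equ: lLinter}) themselves, holds by construction, so only these two points need work. The first step is to record the shape of the ordering: on $M$ it is the $(1,k-1,2(k-1),\dots,L(k-1))$-weighted-degree ordering, assigning the monomial $b_{s}^{[i]}$ the weight $i+s(k-1)$, with larger weight meaning larger order and ties resolved toward the smaller $y$-index. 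The key structural consequence is that a monomial $b_{s}^{[i]}$ meets the degree constraint of the $s$-th component exactly when $i\le ml-s(k-1)-1$, equivalently when its weight is at most $ml-1$; hence the \emph{admissible} monomials form an initial segment of $(M,<)$. Therefore a nonzero $Q\in V$ satisfies all the degree constraints if and only if $\mathrm{LM}(Q)$ is admissible, if and only if $\mathrm{order}(Q)\le N^{*}$, where $N^{*}$ is the order of the largest monomial of weight $ml-1$.

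The second step is a dimension count. Let $V^{*}\subseteq V$ be the GF$(q^{ml})$-subspace spanned by the admissible monomials $\{\,b_{s}^{[i]}:0\le i\le ml-s(k-1)-1,\ 0\le s\le L\,\}$, so that $\dim_{\mathrm{GF}(q^{ml})}V^{*}=\sum_{s=0}^{L}\bigl(ml-s(k-1)\bigr)=(L+1)ml-(k-1)\tfrac{L(L+1)}{2}$. The $(t+l)m$ functionals of~(\ref{equ: lLinter}) restrict to linear functionals on $V^{*}$, so their common kernel inside $V^{*}$ contains a nonzero element whenever $\dim V^{*}>(t+l)m$. Since $\dim V^{*}-(t+l)m=m(lL-t)-(k-1)\tfrac{L(L+1)}{2}$, this inequality is equivalent to $t<lL-L(L+1)\tfrac{k-1}{2m}$, which is precisely the hypothesis. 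Thus there is a nonzero $\tilde Q\in\overline K_{(t+l)m}$ with $\mathrm{order}(\tilde Q)\le N^{*}$.

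The third step is to invoke minimality. By Lemma~\ref{lemma: GIminimum}, applied inductively through all $(t+l)m$ iterations (using that each $\overline K_{i}$ is an $L[x]$-submodule, cf.\ Lemma~\ref{lemma: LxKernels}), every surviving candidate $g_{(t+l)m,j}$ is a minimum of $T_{(t+l)m,j}=\overline K_{(t+l)m}\cap S_{j}$; since every nonzero element of $\overline K_{(t+l)m}$ lies in exactly one $S_{j}$, the output $Q^{*}=\min_{j}g_{(t+l)m,j}$ is a minimum of $\overline K_{(t+l)m}$, and it is nonzero because $\overline K_{(t+l)m}$ contains the nonzero $\tilde Q$. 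Hence $\mathrm{order}(Q^{*})\le\mathrm{order}(\tilde Q)\le N^{*}$, so by the structural observation of the first step $Q^{*}$ meets every degree constraint. Therefore $Q^{*}$ is a nonzero solution of~(\ref{equ: lLinter}) with the required $q$-degrees, which is the assertion of the lemma.

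As for where the difficulty lies: the dimension count is routine and its matching with the stated bound is one line of algebra. The load-bearing step is the structural observation that the admissible monomials form an initial segment of the ordering — this is exactly what forces the \emph{unconstrained} minimum produced by Algorithm~\ref{alg: GILinearized} to automatically respect the $q$-degree constraints, and it is the reason the particular weighted-degree ordering had to be imposed on $M$. A secondary point that needs care is confirming that the (tersely specified) MV ordering really is a refinement of the $(1,k-1,\dots,L(k-1))$-weighted-degree ordering, and that at least one $T_{(t+l)m,j}$ is nonempty so that $Q^{*}$ is well-defined; the latter is subsumed by the existence of $\tilde Q$.
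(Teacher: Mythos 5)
Your proof is correct and follows the same underlying strategy as the paper's: establish that a nonzero solution of (\ref{equ: lLinter}) exists under the stated bound on $t$, then use the minimality of the output of Algorithm~\ref{alg: GILinearized} to conclude. The paper, however, disposes of both steps in two sentences---it cites Mahdavifar--Vardy for existence and simply asserts that the algorithm returns a minimum nonzero solution---whereas you supply the two pieces it leaves implicit: the dimension count (which is what the cited reference does internally and which you correctly match to the bound $t < lL - L(L+1)\frac{k-1}{2m}$), and, more importantly, the observation that the admissible monomials form an initial segment of the weighted ordering on $M$, so that minimality of $Q^*$ actually forces the $q$-degree constraints $\deg_q(Q_s)\le ml-s(k-1)-1$. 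That last point is the real content of the lemma and the paper never states it; your write-up is the more complete argument. One small quibble: $Q^*$ is nonzero because the candidates $g_{i,j}$ are nonzero by construction (the updates never cancel the leading monomial); the existence of $\tilde Q$ is what guarantees that some nonzero element of $\overline{K}_{(t+l)m}$ has order at most $N^*$, not directly that $Q^*\ne 0$.
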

\begin{proof}
As shown in~\cite{mahdavifar_isit10}, when $t < lL - L(L+1) \frac{k-1}{2m}$, there exist nonzero solutions for the interpolation step of the list decoding algorithm fir MV codes. Hence we assume $t < lL - L(L+1) \frac{k-1}{2m}$, then Algorithm~\ref{alg: GILinearized} solves the interpolation problem by finding a \textbf{minimum} nonzero solution to~(\ref{equ: lLinter}), when we adopt the free modules and related operations as described above.
\end{proof}

For Algorithm~\ref{alg: GILinearized}, we set $g_{0,0} = x, g_{0,i} = y_i$ for $i = 1, 2,\ldots, L$ in the initialization step. The update rules in the intermediate iteration steps are the same as in Section~\ref{sec: GeneralGb}, only that we have to use the new ordering related definitions in this section to determine a minimum among the $L+1$ candidates.

Finally we discuss the complexity of Algorithm~\ref{alg: GILinearized} when used to decode $l$-dimensional MV codes. As mentioned above, a nonzero multivariate linearized polynomial $Q(x, y_1, \ldots, y_L)$ can also be obtained by solving the homogeneous system determined by (\ref{equ: lLinter}). The size of the coefficient matrix is $(t+l)m \times (ml (L+1) - \frac{k-1}{2}L(L+1))$. Gaussian elimination has a complexity of  $O((t+l)^2 m^2 (ml (L+1) - \frac{k-1}{2}L(L+1)))$. Given the fact that $ml-L(k-1)-1 \ge 0$ (the $q$-degree of $Q_L(y_L)$ has to be nonnegative), this complexity is $O(L^2 m^2 (t+l)^2 (k-1))$. On the other hand, for Algorithm~\ref{alg: GILinearized}, we have $C = (t+l)m$ linear functionals in this case and a total of $L+1$ elements in the basis of the free module, and the highest $q$-degree of the linearized polynomials in $x$ is at most $ml-1$ among all the iterations. Since the linear functional operation and the multiplication between elements in the ring and elements in the module are defined in the same manner as in KK codes case, the complexity of Algorithm~\ref{alg: GILinearized} is of $O(L^2 m^2 (t+l)l)$. Hence the general interpolation approach is more efficient when compared to solving linear equations.

\section{Conclusion} \label{sec: conclusion}
In this paper, we investigate the general interpolation problem over free modules of a linearized polynomial ring, and propose a general interpolation algorithm. Our general interpolation algorithm are used to decode Gabidulin codes and KK codes. Comparisons are made between our algorithm for Gabidulin codes and Loidreau's decoding algorithm. Analysis shows that the Sudan-style list-1 decoding algorithm for KK codes is a special case of our general interpolation algorithm. Our general interpolation approach also applies to find the multivariate linearized polynomial in the list decoding of MV codes by Mahdavifar and Vardy, while currently no efficient algorithm is available to accomplish the task.
\bibliographystyle{IEEEtran}
\bibliography{interpolation0102}

\begin{thebibliography}{10}
\providecommand{\url}[1]{#1}
\csname url@samestyle\endcsname
\providecommand{\newblock}{\relax}
\providecommand{\bibinfo}[2]{#2}
\providecommand{\BIBentrySTDinterwordspacing}{\spaceskip=0pt\relax}
\providecommand{\BIBentryALTinterwordstretchfactor}{4}
\providecommand{\BIBentryALTinterwordspacing}{\spaceskip=\fontdimen2\font plus
\BIBentryALTinterwordstretchfactor\fontdimen3\font minus
  \fontdimen4\font\relax}
\providecommand{\BIBforeignlanguage}[2]{{%
\expandafter\ifx\csname l@#1\endcsname\relax
\typeout{** WARNING: IEEEtran.bst: No hyphenation pattern has been}%
\typeout{** loaded for the language `#1'. Using the pattern for}%
\typeout{** the default language instead.}%
\else
\language=\csname l@#1\endcsname
\fi
#2}}
\providecommand{\BIBdecl}{\relax}
\BIBdecl

\bibitem{RothBookRS}
R.~M. Roth, \emph{Introduction to Coding Theory}.\hskip 1em plus 0.5em minus
  0.4em\relax Cambridge University Press, 2006.

\bibitem{kotter_it96}
R.~K\"{o}tter, ``{Fast Generalized Minimum-Distance Decoding of Algebraic
  Geometry and Reed-Solomon Codes},'' \emph{IEEE Trans. Info. Theory}, vol.~42,
  no.~3, pp. 721--736, May 1996.

\bibitem{GS99}
V.~Guruswami and M.~Sudan, ``{Improved Decoding of Reed-Solomon Codes and
  Algebraic Geometry Codes},'' \emph{IEEE Trans. Info. Theory}, vol.~45, no.~6,
  pp. 1757--1767, September 1999.

\bibitem{WelchBerlekamp86}
L.~R. Welch and E.~R. Berlekamp, ``Error correction for algebraic block
  codes,'' U.S. Patent no. 4,633,470, December 30, 1986.

\bibitem{WangMcEliece05}
B.~Wang, R.~J. McEliece, and K.~Watanabe, ``K\"otter interpolation over free
  modules,'' in \emph{Proc. 2005 Allerton Conf. Communications Control and
  Computing}, Monticello, IL, October 2005, pp. 2197--2206.

\bibitem{gabidulin_pit0185}
E.~M. Gabidulin, ``Theory of codes with maximum rank distance,'' \emph{Problems
  of Information Transmission}, vol.~21, no.~1, pp. 1--12, January 1985.

\bibitem{kotter_it08}
R.~K\"{o}tter and F.~R. Kschischang, ``Coding for errors and erasures in random
  network coding,'' \emph{IEEE Trans. Info. Theory}, vol.~54, no.~8, pp.
  3579--3591, August 2008.

\bibitem{silva_it08}
D.~Silva, F.~R. Kschischang, and R.~K\"{o}tter, ``A rank-metric approach to
  error control in random network coding,'' \emph{IEEE Trans. Info. Theory},
  vol.~54, no.~9, pp. 3951--3967, September 2008.

\bibitem{silva_it09}
D.~Silva and F.~R. Kschischang, ``On metrics for error correction in network
  coding,'' \emph{IEEE Trans. Info. Theory}, vol.~55, no.~12, pp. 5479--5490,
  December 2009.

\bibitem{lusina_it03}
P.~Lusina, E.~M. Gabidulin, and M.~Bossert, ``Maximum rank distance codes as
  space-time codes,'' \emph{IEEE Trans. Info. Theory}, vol.~49, no.~10, pp.
  2757--2760, October 2003.

\bibitem{gabidulin_lncs91}
E.~M. Gabidulin, A.~V. Paramonov, and O.~V. Tretjakov, ``Ideals over a
  non-commutative ring and their application in cryptology,'' in \emph{Proc.
  Eurocrypt}, Brighton, UK, April 1991, pp. 482--489.

\bibitem{gabidulin_pit0285}
E.~M. Gabidulin, ``Optimal codes correcting lattice-pattern errors,''
  \emph{Problems of Information Transmission}, vol.~21, no.~2, pp. 3--11, 1985.

\bibitem{loidreau_wcc05}
P.~Loidreau, ``A {W}elch-{B}erlekamp like algorithm for decoding {G}abidulin
  codes,'' in \emph{Proc. International Workshop on Coding and Cryptography},
  Bergen, Norway, March 2005, pp. 36--45.

\bibitem{mahdavifar_isit10}
H.~Mahdavifar and A.~Vardy, ``Algebraic list-decoding on the operator
  channel,'' in \emph{Proc. IEEE Int. Symp. Info. Theory}, Austin, USA, June
  2010, pp. 1193--1197.

\bibitem{mahdavifar_it10}
------, ``{Algebraic List-Decoding on the Operator Channel},'' \emph{submitted
  to IEEE Trans. Info. Theory}, April 2010.

\bibitem{lidl_book83}
R.~Lidl and H.~Niederreiter, \emph{Finite Fields}, ser. Encyclopedia of
  Mathematics and its Applications, G.~Rota, Ed., 1983, vol.~20.

\bibitem{ore_ams33}
O.~Ore, ``On a special class of polynomials,'' \emph{Transactions of the
  American Mathematical Society}, vol.~35, pp. 559--584, 1933.

\bibitem{delsarte_jct78}
P.~Delsarte, ``Bilinear forms over a finite field, with applications to coding
  theory,'' \emph{Journal of Combinatorial Theory A}, vol.~25, no.~3, pp.
  226--241, November 1978.

\bibitem{roth_it91}
R.~M. Roth, ``Maximum-rank array codes and their application to crisscross
  error correction,'' \emph{IEEE Trans. Info. Theory}, vol.~37, no.~2, pp.
  328--336, March 1991.

\bibitem{McEliece03}
R.~J. McEliece, ``{The Guruswami-Sudan Decoding Algorithm for Reed-Solomon
  Codes},'' Interplanetary Network Progress Report 42-153, May 2003.

\bibitem{silva_isit09}
D.~Silva and F.~R. Kschishcang, ``Fast encoding and decoding of {G}abidulin
  codes,'' in \emph{Proc. IEEE Int. Symp. Info. Theory}, Seoul, Korea, June
  2009, pp. 2858--2862.

\bibitem{max_ciss08}
M.~Gadouleau and Z.~Yan, ``Complexity of decoding gabidulin codes,'' in
  \emph{42nd Annual Conference on Information Sciences and Systems}, Princeton,
  USA, March 2008, pp. 1081--1085.

\bibitem{Loidreau_wcc06}
P.~Loidreau, ``A welch-berlekamp like algorithm for decoding gabidulin codes,''
  in \emph{Proceeding of WCC'2005}, 2006, pp. 36--54.

\bibitem{ore_ams34}
O.~Ore, ``Contribution to the theory of finite fields,'' \emph{Transactions of
  the American Mathematical Society}, vol.~36, pp. 243--274, 1934.

\end{thebibliography}

\end{document}